

\documentclass[pdftex,11pt,a4paper]{article}

\makeatletter
\makeatother


\usepackage[percent]{overpic}
\usepackage{mathtools} 
\usepackage{amsthm}

\newcommand\independent{\protect\mathpalette{\protect\independenT}{\perp}}
\def\independenT#1#2{\mathrel{\rlap{$#1#2$}\mkern2mu{#1#2}}}

 \newcounter{thm}
 \setcounter{thm}{0}
 \newcounter{ex}
 \setcounter{ex}{0}
 \newcounter{re}
 \setcounter{re}{0}

 \newtheorem{Proposition}[thm]{Proposition}



\usepackage[textsize=tiny]{todonotes}

\usepackage{color}

\usepackage{subcaption}
\usepackage{authblk}
\usepackage[top=1in, bottom=2in, left=1.25in, right=1.25in]{geometry}

\begin{document}

\title{Invariant components of synergy, redundancy, and unique information among three variables}

\date{\vspace{-5ex}}

\author[1]{Giuseppe Pica \thanks{giuseppe.pica@iit.it}}
\author[1]{Eugenio Piasini}
\author[1,2]{Daniel Chicharro}
\author[1]{Stefano Panzeri \thanks{stefano.panzeri@iit.it}}
\affil[1]{Neural Computation Laboratory, Center for Neuroscience and Cognitive Systems @UniTn, Istituto Italiano di Tecnologia, Rovereto (TN) 38068, Italy}
\affil[2]{Department of Neurobiology, Harvard Medical School, Boston, MA 02115, USA}

\maketitle


\begin{abstract}
In a system of three stochastic variables, the Partial Information Decomposition (PID) of Williams and Beer dissects the information that two variables (sources) carry about a third variable (target) into nonnegative information atoms that describe redundant, unique, and synergistic modes of dependencies among the variables. However, the classification of the three variables into two sources and one target limits the dependency modes that can be quantitatively resolved, and does not naturally suit all systems. 
Here, we extend the PID to describe trivariate modes of dependencies in full generality, without introducing additional decomposition axioms or making assumptions about the target/source nature of the variables. By comparing different PID lattices of the same system, we unveil a finer PID structure made of seven nonnegative information subatoms that are invariant to different target/source classifications and that are sufficient to construct any PID lattice. This finer structure naturally splits redundant information into two nonnegative components: the source redundancy, which arises from the pairwise correlations between the source variables, and the non-source redundancy, which does not, and relates to the synergistic information the sources carry about the target. The invariant structure is also sufficient to construct the system's entropy, hence it characterizes completely all the interdependencies in the system.
\end{abstract}








\section{Introduction}


Shannon's mutual information \cite{Shannon1948} provides a well established, widely applicable tool to characterize the statistical relationship between two stochastic variables. Larger values of mutual information correspond to a stronger relationship between the instantiations of the two variables in each single trial. Whenever we study a system with more than two variables, the mutual information between any two subsets of the variables still quantifies the statistical dependencies between these two subsets; however, many scientific questions in the analysis of complex systems require a finer characterization of how all variables simultaneously interact~\cite{Ay2006,Bertschinger2013,Tononi1999,Tikhonov2015,Timme2014}. For example, two of the variables, $A$ and $B$, may carry either redundant or synergistic information about a third variable $C$~\cite{Pola2003,Schneidman2003,Latham2005}, but considering the value of the mutual information $I((A,B):C)$ alone is not enough to distinguish these qualitatively different information-carrying modes. To achieve this finer level of understanding, recent theoretical efforts have focused on decomposing the mutual information between two subsets of variables  into more specific information components (see e.g.~\cite{Timme2014,Williams2010,Bertschinger2014,Chicharro2017}). Nonetheless, a complete framework for the information-theoretic analysis of multivariate systems is still lacking.

Here we consider the analysis of trivariate systems, which is complex enough to present the fundamental challenges of going beyond bivariate analyses, and yet simple enough to provide hints about how these challenges might be addressed. 
Characterizing the fine structure of the interactions among three stochastic variables can improve the understanding of many interesting problems across different disciplines~\cite{Anastassiou2007,Ludtke2008,Watkinson2009,Faes2016}. For example, in the study of neural information processing many important questions can be cast as trivariate information analyses. Determining quantitatively how two neurons encode information about an external sensory stimulus~\cite{Pola2003,Schneidman2003,Latham2005} requires describing the dependencies between the stimulus and the activity of the two neurons. Determining how the stimulus information carried by a neural response relates to the animal's behaviour~\cite{Pitkow2015,Haefner2013,Panzeri2017} requires the analysis of the simultaneous three-wise dependencies among the stimulus, the neural activity and the subject's behavioral report. More generally, a thorough understanding of even the simplest information-processing systems would require the quantitative description of all different ways two inputs carry information about one output~\cite{Wibral2017}.

In systems where legitimate assumptions can be made about which variables act as sources of information and which variable acts as the target of information transmission,  the partial information decomposition (PID)~\cite{Williams2010} provides an elegant framework to decompose the mutual information that one or two (source) variables carry about the third (target) variable into a finer lattice of redundant, unique and synergistic information atoms. However, in many systems the \emph{a priori} classification of variables into sources and target is arbitrary, and limits the description of the distribution of information within the system~\cite{James2016}. Furthermore, even when one classification is adopted, the PID atoms do not characterize completely all the possible modes of information sharing between the sources and the target. For example, two sources can carry redundant information about the target irrespective of the strength of the correlations between them and, as a consequence,  the PID redundancy atom can be larger than zero even if the sources have no mutual information~~\cite{Harder2013,Bertschinger2013,Barrett2015}. Hence, the value of the PID redundancy measure cannot distinguish how the correlations between two variables contribute to the information that they share about a third variable.

In this paper, we address these limitations by extending the PID framework without introducing further axioms or assumptions about the trivariate structure to analyze.
We compare the atoms from the three possible PID lattices that are induced by the three possible choices for the target variable in the system. By tracking how the PID information modes change across different lattices, we move beyond the partial perspective intrinsic to a single PID lattice and unveil the finer structure common to all PID lattices. We find that all lattices can be fully described in terms of a unique minimal set of seven information-theoretic quantities, that is invariant to different classifications of the variables.

The first result of this approach is the identification of two nonnegative subatomic components of the redundant information that any pair of variables carries about the third variable. The first component, that we name source redundancy ($SR$), quantifies the part of the redundancy which arises from the correlations of the sources. The second component, that we name non-source redundancy ($NSR$), quantifies the part of the redundancy which is not related to the source correlations. Interestingly, we find that whenever the non-source redundancy is larger than zero then also the synergy is larger than zero. The second result is that the minimal set induces a unique nonnegative decomposition of the full joint entropy $H(X,Y,Z)$ of the system. This allows us to dissect completely the distribution of information of any trivariate system in a general way that is invariant with respect to the source/target classification of the variables.
To illustrate the additional insights of this new approach,  we finally apply our framework to paradigmatic examples, including discrete and continuous probability distributions. These applications confirm our intuitions and clarify the practical usefulness of the finer PID structure.

\section{Preliminaries and state of the art}
\label{sec:prelim}
Williams and Beer proposed an influential axiomatic  decomposition of the mutual information $I(X: (Y,Z))$ that two stochastic variables $Y, Z$ (the sources) carry about a third variable $X$ (the target) into the sum of four nonnegative atoms ~\cite{Williams2010}:
\begin{itemize}
\item $SI(X: \{Y;Z\}) $, which is the information about the target that is shared between the two sources (the redundancy);
\item $UI(X: \{Y \backslash Z\})$ and $UI(X: \{Z \backslash Y\})$, which are the separate pieces of information about the target that can be extracted from one of the sources, but not from the other;
\item $CI(X: \{Y;Z\}) $, which is the complementary information about the target that is only available when both of the sources are jointly observed (the synergy).
\end{itemize}

This construction is commonly known as the Partial Information Decomposition (PID). Sums of subsets of the four PID atoms provide the classical mutual information quantities between each of the sources and the target, $I(X:Y)$ and $I(X:Z)$, and the conditional mutual information quantities whereby one of the sources is the conditioned variable, $I(X:Z|Y)$ and $I(X:Y|Z)$. Such relationships are displayed with a color code in Fig.~\ref{fig:PID}.

\begin{figure}[t!]
\centering
\begin{subfigure}{.35\textwidth}
\includegraphics[width=\textwidth]{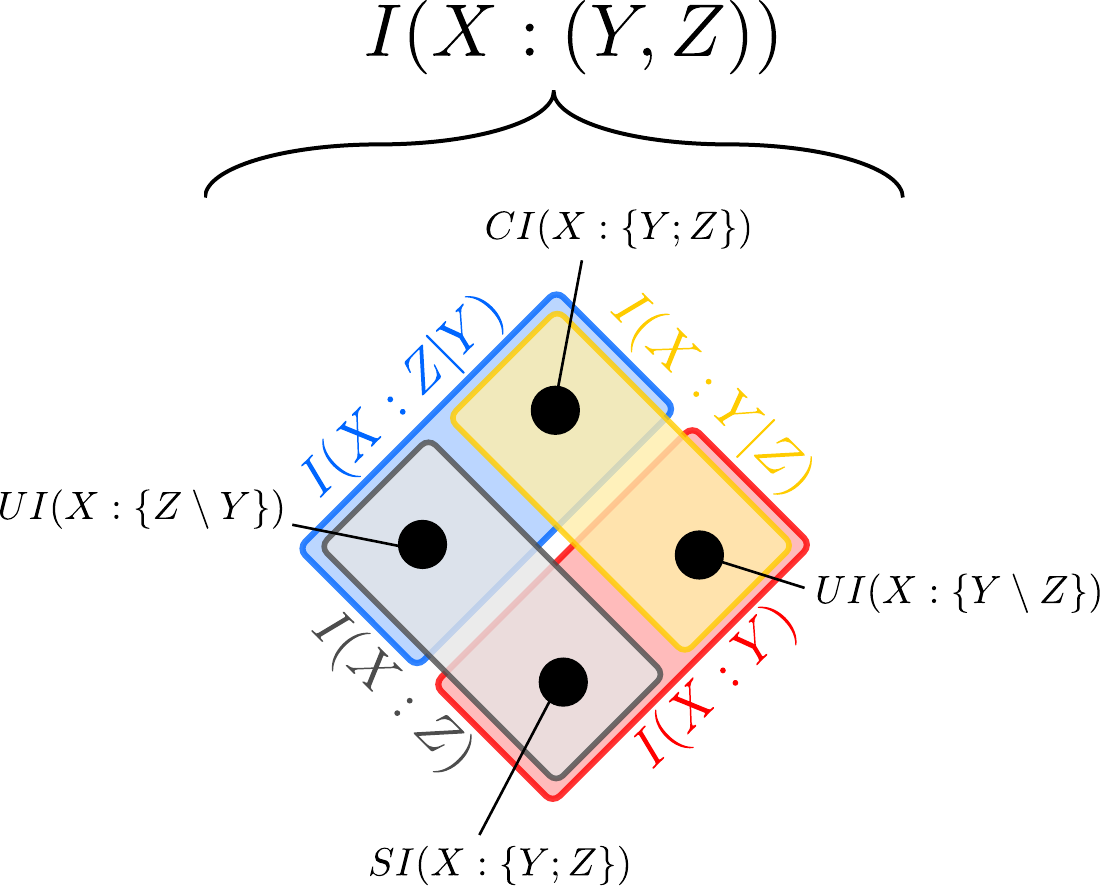}
\caption{}
\label{fig:PIDa}
\end{subfigure}\hspace{2ex}
\begin{subfigure}{.6\textwidth}
\begin{overpic}[trim={3cm 2cm 3.5cm 3cm},clip,width=\textwidth]{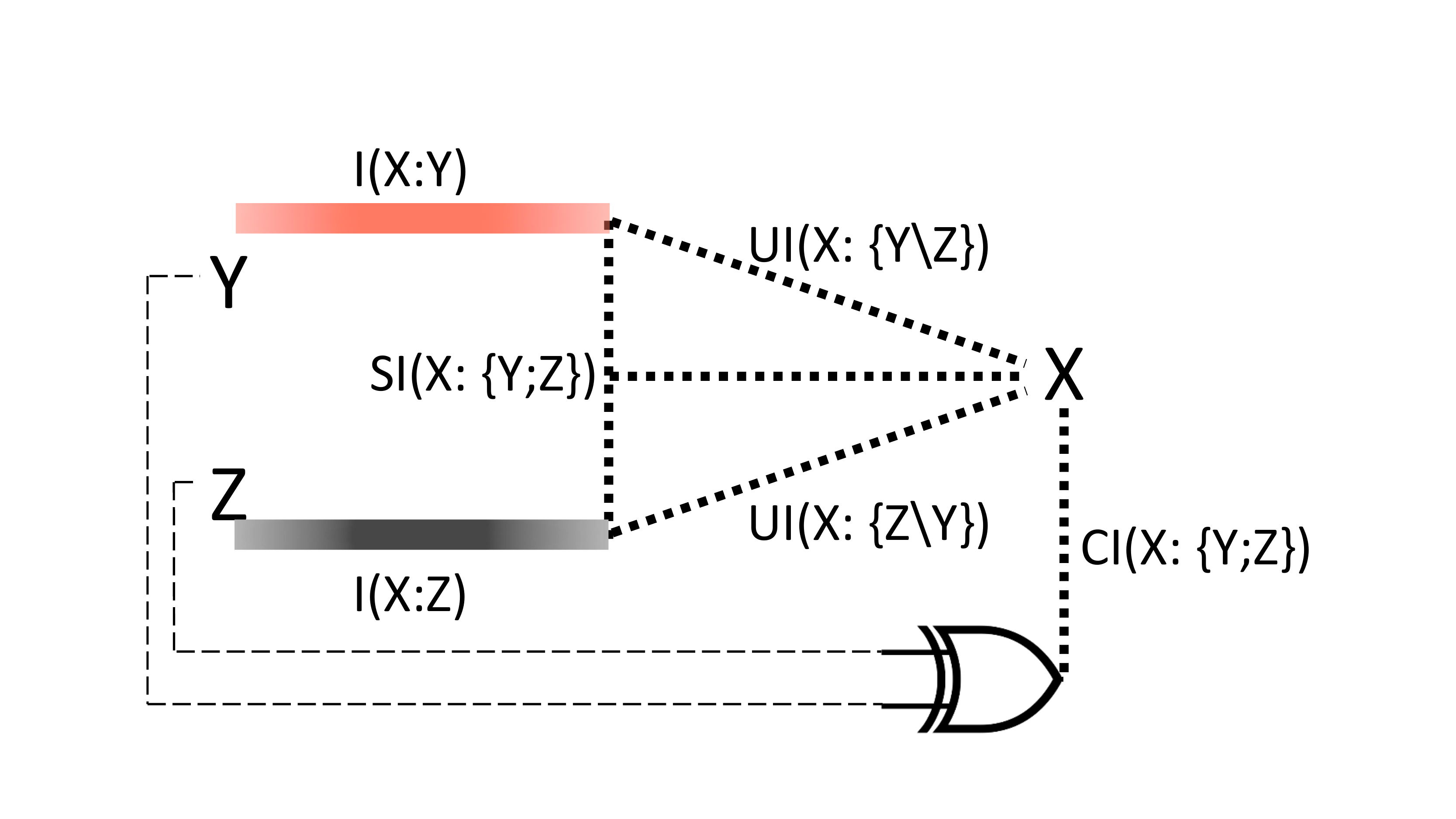}
\end{overpic}
\caption{}
\label{fig:PIDb}
\end{subfigure}
\caption{The Partial Information Decomposition as defined by Williams and Beer's axioms ~\cite{Williams2010}. \textbf{a)} The mutual information of the sources $Y, Z$ about the target $X$ is decomposed into four atoms: the redundancy $SI(X:\{Y;Z\})$, the unique informations $UI(X:\{Y \backslash Z\})$, $UI(X:\{Z \backslash Y\})$, and the synergy $CI(X:\{Y;Z\})$. The colored rectangles represent the linear equations that relate the four PID atoms to four Shannon information quantities. \textbf{b)} An exploded view of the allotment of information between the sources $Y, Z$ and the target $X$: each PID atom of panel \textbf{a)} corresponds to a thick dotted line, while the colored stripes represent the two pairwise mutual informations between each of the sources and the target (with the same color code as in \textbf{a)}). Each of the mutual informations splits into the sum of the redundancy with its corresponding unique information. The circuit-diagram symbol for the XOR operation is associated to the synergistic component $CI(X\{Y;Z\})$ only for illustration, as XOR is often taken as a paradigmatic example of synergistic interaction between variables.}
\label{fig:PID}
\end{figure}

The PID decomposition of Ref.~\cite{Williams2010} is based upon a number of axioms that do not determine univocally the value of the four PID atoms. The specific measure proposed in Ref.~\cite{Williams2010} to calculate such values has been questioned as it can lead to unintuitive results~\cite{Harder2013}, and thus many attempts have been devoted to finding alternative measures~\cite{Griffith2014, Harder2013, Griffith2014bis, Banerjee2015, Ince2017, Rauh2017} compatibly with an extended number of axioms, such as the \emph{identity axiom} proposed in~\cite{Harder2013}. Other work has studied in more detail the lattice structure that underpins the PID, indicating the duality between information gain and information loss lattices~\cite{Chicharro2017}. Even though there is no consensus on how to build partial information decompositions in systems with more than two sources, for trivariate systems the measures of redundancy, synergy and unique information defined in Ref.~\cite{Bertschinger2014} have not yet been questioned and have found wide acceptance (in this paper, we will in fact make use of these measures when a concrete implementation of the PID will be required).

However, even in the trivariate case there are open problems regarding the understanding of the PID atoms in relation to the interdependencies within the system. First, Harder et al.~\cite{Harder2013} pointed out that the redundant information shared between the sources about the target can intuitively arise from the following two qualitatively different modes of three-wise interdependence:

\begin{itemize}
\item the \emph{source redundancy}, which is redundancy which 'must already manifest itself in the mutual information between the sources'\footnote{Note that Ref.~\cite{Harder2013} interchangeably refers to the sources as 'inputs': we will discuss this further in Section ~\ref{sec:sourcered} when addressing the characterization of source redundancy.};
\item the \emph{mechanistic redundancy}, which can be larger than zero even if there is no mutual information between the sources.
\end{itemize}

As pointed out by Harder and colleagues~\cite{Harder2013}, a more precise conceptual and formal separation of these two kinds of redundancy still needs to be achieved, and presents fundamental challenges. The very notion that two statistically independent sources can nonetheless share information about a target was not captured by some earlier definitions of redundancy~\cite{Griffith2014,Griffith2014bis}. Further, several studies ~\cite{Barrett2015,Stramaglia2016} described the property that the PID measures of redundancy can be positive even when there are no correlations between the sources as undesired. On a different note, other authors~\cite{Wibral2017} pointed out that the two different notions of redundancy can define qualitatively different modes of information processing in (neural) input-output networks. 

Other issues were recently pointed out by James and Crutchfield~\cite{James2016}, who indicated that the very definition of the PID lattice prevents its use as a general tool for assessing the full structure of trivariate (let alone multi-variate) statistical dependencies. In particular, Ref.~\cite{James2016} considered dyadic and triadic systems, which underlie quite interesting and common modes of multivariate interdependencies. They showed that, even though the PID atoms are among the very few measures that can distinguish between the two kinds of systems, a PID lattice cannot allot the full joint entropy $H(X,Y,Z)$ of either system. The decomposition of the joint entropy in terms of information components that reflect qualitatively different interactions within the system has also been subject of recent research~\cite{Rosas2016,Ince2017}. 

In summary, the PID framework, in its current form, does not yet provide a satisfactorily fine and complete description of the distribution of information in trivariate systems. The PID atoms do assess trivariate dependencies better than Shannon's measures, but they cannot quantify interesting finer interdependencies within the system, such as the source redundancy that the sources share about the target. In addition, they are limited to describing the dependencies between the chosen sources and target, thus enforcing a certain perspective on the system that does not naturally suit all systems.

\section{More PID diagrams unveil finer structure in the PID framework}\label{sec:morepids}

To address the open problems described above, we begin by pointing out the feature of the PID lattice that underlies all the issues in the characterization of trivariate systems outlined in Sec.~\ref{sec:prelim}. As we illustrate in Fig.~\ref{fig:PIDa}, while a single PID diagram involves the mutual information quantities that one or both of the sources (in the figure, $Y$ and $Z$) carry about the target $X$, it does not contain the mutual information between the sources $I(Y:Z)$ and their conditional mutual information $I(Y:Z|X)$. This precludes the characterization of source redundancy with a single PID diagram, as it prevents any comparison between the elements of the PID and $I(Y:Z)$. Moreover, it also signals that a single PID lattice cannot account for the total entropy $H(X,Y,Z)$ of the system.

These considerations suggest that the inability of the PID framework to provide a complete information-theoretic description of trivariate systems is not a consequence of the axiomatic construction underlying the PID lattice. Instead, it follows from restricting the analysis to the limited perspective on the system that is enforced by classifying the  variables into sources and target when defining a PID lattice. We thus elaborate that significant progress can be achieved, \emph{without the addition of further axioms or assumptions to the PID framework}, if one just considers, alongside the PID diagram in Fig.~\ref{fig:PIDa}, the other two PID diagrams that are induced respectively by labeling $Y$ or $Z$ as the target in the system. When considering the resulting three PID diagrams (Fig.~\ref{fig:PIDs}),  the previously missing mutual information $I(Y:Z)$ between the original sources of the left-most diagram is now decomposed into PID atoms of the middle and the right-most diagrams in Fig.~\ref{fig:PIDs}, and the same happens with $I(Y:Z|X)$.

\begin{figure}[t!]
\centering
\begin{overpic}[trim={0 0cm 0 0cm},clip,width=\textwidth]{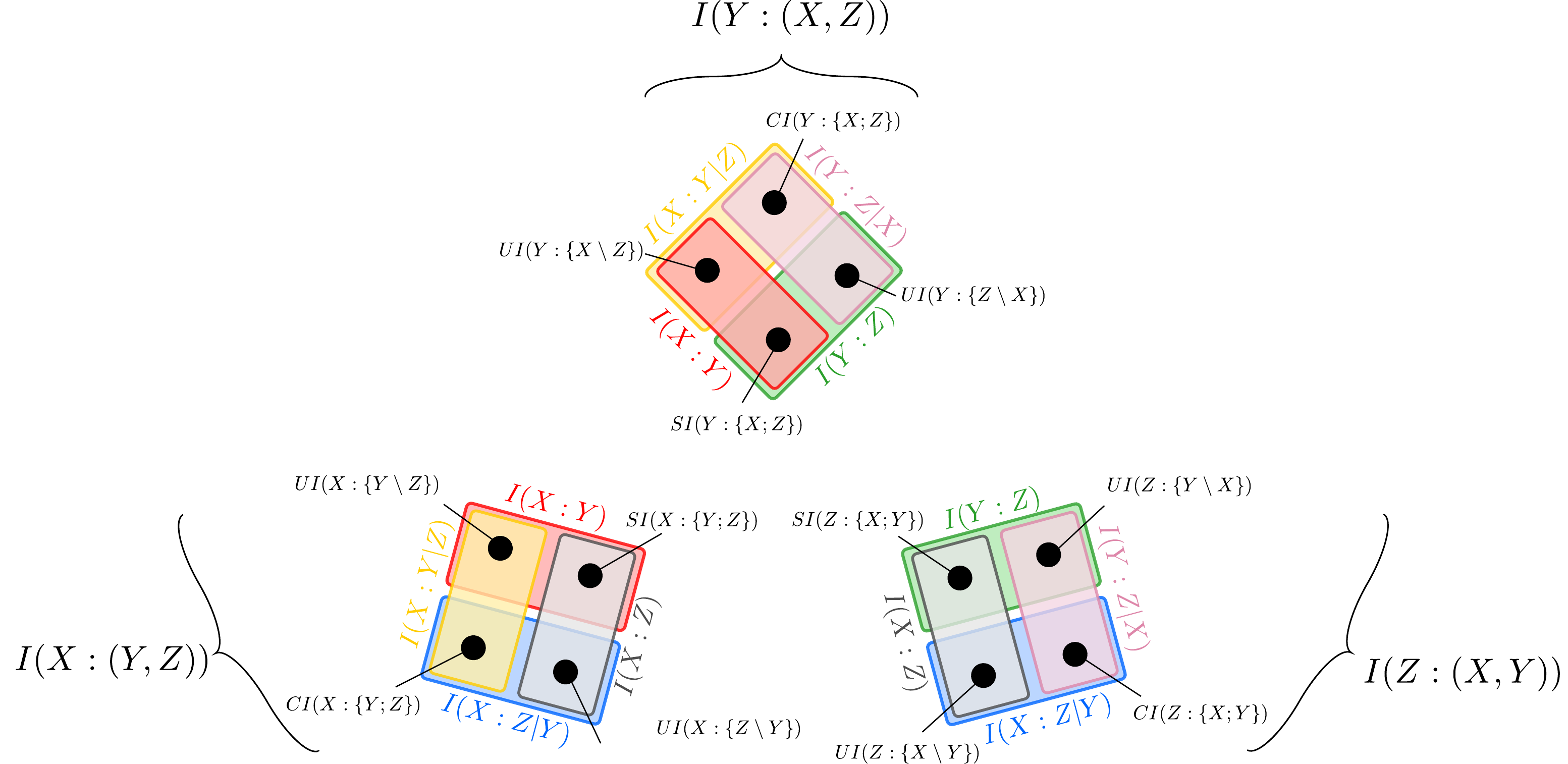}
\end{overpic}
\caption{The three possible PIDs of a trivariate probability distribution $p(x,y,z)$ that follow from the three possible choices for the target variable: on the left the target is $X$, in the middle it is $Y$ and on the right it is $Z$. The coloured rectangles highlight the linear relationships between the twelve PID atoms and the six Shannon information quantities. Note that the orientations of the PIDs for $I(X:(Y,Z))$ and $I(Z:(X,Y))$ are rotated with respect to the PID for $I(Y:(X,Z))$ to highlight their reciprocal relations, as will become more apparent in Fig.~\ref{fig:entropyconst}.}
\label{fig:PIDs}
\end{figure}

In the following we take advantage of this shift in perspective to resolve the finer structure of the PID diagrams and, at the same time, to generalize its descriptive power going beyond the current limited framework, where only the information that two (source) variables carry about the third (target) variable is decomposed. More specifically, even though the PID relies
on setting a partial point of view about the system, we will show that describing how the PID atoms change when we systematically rotate the choice of the PID target variable effectively overcomes the  limitations intrinsic to  one PID alone.

\subsection{The relationship between PID diagrams with different target selections}\label{sec:split}

\begin{figure}[t!]
\centering
\begin{overpic}[trim={0cm 0cm 0cm 0cm},clip,width=\textwidth]{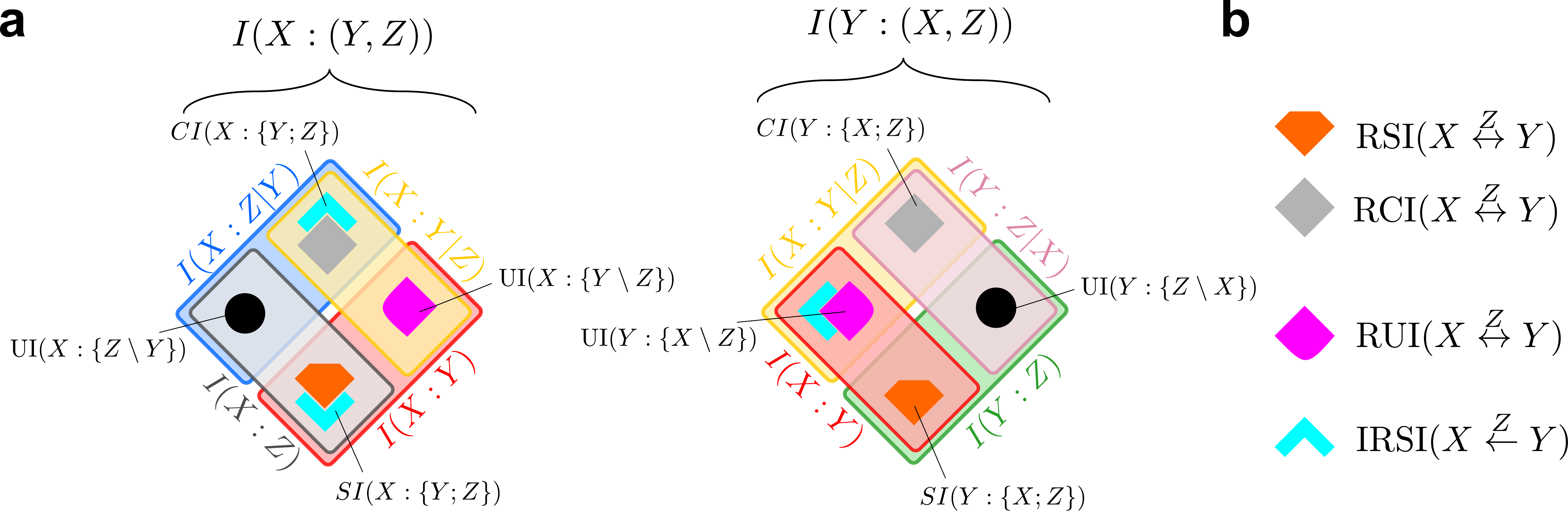}
\end{overpic}
\caption{\textbf{(a)} The relationships between the PID atoms from two diagrams with different target selections. When we swap a target and a source, the differences in the amount of redundancy, synergy, and unique information with respect to the third variable are not independent due to equations of the type of Eqs.~\ref{eq:pid1} and ~\ref{eq:pid2}. Here, we consider the PID diagram of $I(X: (Y,Z))$ (left) and $I(Y: (X,Z))$ (right) from Fig.~\ref{fig:PIDs}, under the assumption that $SI(Y: \{X;Z\})\leq SI(X: \{Y;Z\})$. \textbf{(b)} The reversible pieces of information $RSI(X \overset{\text{Z}}{\leftrightarrow} Y)$ (orange block), $RCI(X \overset{\text{Z}}{\leftrightarrow} Y)$ (gray block) and $RUI(X \overset{\text{Z}}{\leftrightarrow} Y)$ (magenta block) contribute to the same kind of atom across the two PID diagrams. The irreversible piece of information $IRSI(X \overset{\text{Z}}{\leftarrow} Y)$ (light blue block) contributes to different kinds of atom across the two PID diagrams. The remaining two unique information atoms (black dots) are not constrained by equations of the type of Eqs.~\ref{eq:pid1} and ~\ref{eq:pid2}. }
\label{fig:changepid}
\end{figure}

To identify the finer structure underlying all the PID diagrams in Fig.~\ref{fig:PIDs}, we first focus on the relationships between the PID atoms of two different diagrams, with the goal of understanding how to move from one perspective on the system to another. The key observation here is that, for each pair of variables in the system, their mutual information and their conditional mutual information given the third variable appear in two of the PID diagrams. This imposes some constraints relating the PID atoms in two different diagrams. For example, if we consider $I(X:Y)$ and $I(X:Y|Z)$, we find that:
\begin{gather}
I(X:Y)=SI(X:\{Y;Z\})+UI(X:\{Y\backslash Z\})=SI(Y:\{X;Z\})+UI(Y:\{X\backslash Z\}), \label{eq:pid1}\\
I(X:Y|Z)=CI(X:\{Y;Z\})+UI(X:\{Y\backslash Z\})=CI(Y:\{X;Z\})+UI(Y:\{X\backslash Z\}) \label{eq:pid2},
\end{gather}
where the first and second equality in each equation result from the decomposition of $I(X:(Y,Z))$ (left-most diagram in Fig.~\ref{fig:PIDs}) and $I(Y:(X,Z))$ (middle diagram in Fig.~\ref{fig:PIDs}), respectively. From Eq.~\ref{eq:pid1} we see that, when the roles of a target and a source are reversed (here, the roles of $X$ and $Y$), the difference in redundancy is the opposite of the difference in unique information with respect to the other source (here, $Z$). Similarly, Eq.~\ref{eq:pid2} shows that the difference in synergy is the opposite of the difference in unique information with respect to the other source. Combining these two equalities, we also see that the difference in redundancy is equal to the difference in synergy. Therefore, the equalities impose relationships across some PID atoms appearing in two different diagrams. 

These relationships are depicted in Figure ~\ref{fig:changepid}. The eight PID atoms appearing in the two diagrams can be expressed in terms of only six subatoms, due to the constraints of the form of Eqs.~\ref{eq:pid1} and~\ref{eq:pid2}. In particular, to select the smallest nonnegative pieces of information resulting from the constraints, we define:
\begin{subequations}
\begin{align}
RSI(X \overset{\text{Z}}{\leftrightarrow} Y)\coloneqq & \min[SI(X: \{Y;Z\}), SI(Y: \{X;Z\})],\label{eq:rsi1}\\
RCI(X \overset{\text{Z}}{\leftrightarrow} Y)\coloneqq & \min[CI(X: \{Y;Z\}), CI(Y: \{X;Z\})],\label{eq:rci1}\\
RUI(X \overset{\text{Z}}{\leftrightarrow} Y)\coloneqq & \min[UI(X:\{Y\backslash Z\}), UI(Y:\{X\backslash Z\})].\label{eq:rui1}
\end{align}
\end{subequations}
The above terms are called the Reversible Shared Information of $X$ and $Y$ considering $Z$ ($RSI(X \overset{\text{Z}}{\leftrightarrow} Y)$; the orange block in Fig.~\ref{fig:changepid}), the Reversible Complementary Information of $X$ and $Y$ considering $Z$ ($RCI(X \overset{\text{Z}}{\leftrightarrow} Y)$; the gray block in Fig.~\ref{fig:changepid}), and the Reversible Unique Information of $X$ and $Y$ considering $Z$ ($RUI(X \overset{\text{Z}}{\leftrightarrow} Y)$; the magenta block in Fig.~\ref{fig:changepid}). 
The attribute \emph{reversible} highlights that, when we reverse the roles of target and source between the two variables at the endpoints of the arrow in $RSI$, $RCI$, or $RUI$ (here, $X$ and $Y$), the reversible pieces of information are still included in the same type of PID atom (redundancy, synergy, or unique information with respect to the third variable). For example, the orange block in Fig.~\ref{fig:changepid} indicates a common amount of redundancy in both PID diagrams: as such, $RSI(X \overset{\text{Z}}{\leftrightarrow} Y)$ contributes both to redundant information that $Y$ and $Z$ share about $X$, and to redundant information that $X$ and $Z$ share about $Y$. By construction, these reversible components are symmetric in the reversed variables. Note that, when we reverse the role of two variables, the third variable (here, $Z$) remains a source and is thus put in the middle of our notation in Eqs.~\ref{eq:rsi1}, ~\ref{eq:rci1} and ~\ref{eq:rui1}.
We also define the Irreversible Shared Information $IRSI(X \overset{\text{Z}}{\leftarrow} Y)$ between $X$ and $Y$ considering $Z$ (the light blue block in Fig.~\ref{fig:changepid}) as follows:
\begin{equation}\label{eq:irsi1}
IRSI(X \overset{\text{Z}}{\leftarrow} Y)\coloneqq SI(X: \{Y;Z\})-RSI(X \overset{\text{Z}}{\leftrightarrow} Y).
\end{equation}
The attribute \emph{irreversible} in the above definition indicates that this piece of redundancy is specific to one of the two PIDs alone. More precisely, the uni-directional arrow in $IRSI(X \overset{\text{Z}}{\leftarrow} Y)$ indicates that this piece of information is a part of the redundancy with $X$ as a target, but it is not a part of the redundancy with $Y$ as a target\footnote{In this paper, directional arrows never represent any kind of \emph{causal} directionality: the PID framework is only capable to quantify statistical (correlational) dependencies.}. Correspondingly, at least one between $IRSI(X \overset{\text{Z}}{\leftarrow} Y)$ and $IRSI(Y \overset{\text{Z}}{\leftarrow} X)$ is always zero. More generally, $IRSI$ quantifies asymmetries between two different PIDs: for example, when moving from the left to the right PID in Fig.~\ref{fig:changepid}, the light blue block $IRSI(X \overset{\text{Z}}{\leftarrow} Y)$ indicates an equivalent amount of information that is lost for the redundancy $SI(X:\{Y;Z\})$ and the synergy $CI(X:\{Y;Z\})$ atoms, and is instead counted as a part of the unique information $UI(Y: \{X \backslash Z\})$ atom. In other words, assuming that the two redundancies are ranked as in Fig.~\ref{fig:changepid}, we find that:
\begin{subequations}
\label{eq:irsiexp}
\begin{align}
IRSI(X \overset{\text{Z}}{\leftarrow} Y) &= SI(X:\{Y;Z\})-SI(Y:\{X;Z\}) \label{eq:irsiexpa}\\ &= CI(X:\{Y;Z\})-CI(Y:\{X;Z\}) \label{eq:irsiexpb}\\ &= UI(Y:\{X\backslash Z\})-UI(X:\{Y \backslash Z\}).\label{eq:irsiexpc}
\end{align}
\end{subequations}
While the coarser Shannon information quantities that are decomposed in both diagrams in Fig.~\ref{fig:changepid}, namely $I(X : Y)$ and $I(X : Y|Z)$, are symmetric under swap of $X \leftrightarrow Y$, their PID decompositions (see Eqs.~\ref{eq:pid1} and ~\ref{eq:pid2}) are not: Eq.~\ref{eq:irsiexp} show that $IRSI$ quantifies the amount of this asymmetry.
More precisely, the PID decompositions of $I(X : Y)$ and $I(X : Y|Z)$ will preserve the $X \leftrightarrow Y$ symmetry if and only if $IRSI(X \overset{\text{Z}}{\leftarrow} Y)=IRSI(Y \overset{\text{Z}}{\leftarrow} X)=0$. Note that, in general, the differences of redundancies, of synergies, and of unique information terms are always constrained by equations such as Eq.~\ref{eq:irsiexp}. Hence, unlike for the reversible measures, we do not need to consider independent notions of irreversible synergy or irreversible unique information.

In summary, the four subatoms in Eqs.~\ref{eq:rsi1}, ~\ref{eq:rci1}, ~\ref{eq:rui1} and ~\ref{eq:irsi1}, together with the two remaining unique information terms (the black dots in Fig.~\ref{fig:changepid}), allow us to characterize both PIDs in Fig.~\ref{fig:changepid} and to understand how the PID atoms change when moving from one PID to another. Note that in all cases the blocks indicate amounts of information, but the interpretation of this information depends on the classification of variables as target and sources within each diagram. 

\subsection{Unveiling the finer structure of the PID framework}\label{sec:structure}

\begin{figure}[t!]
\centering
\includegraphics[width=\textwidth]{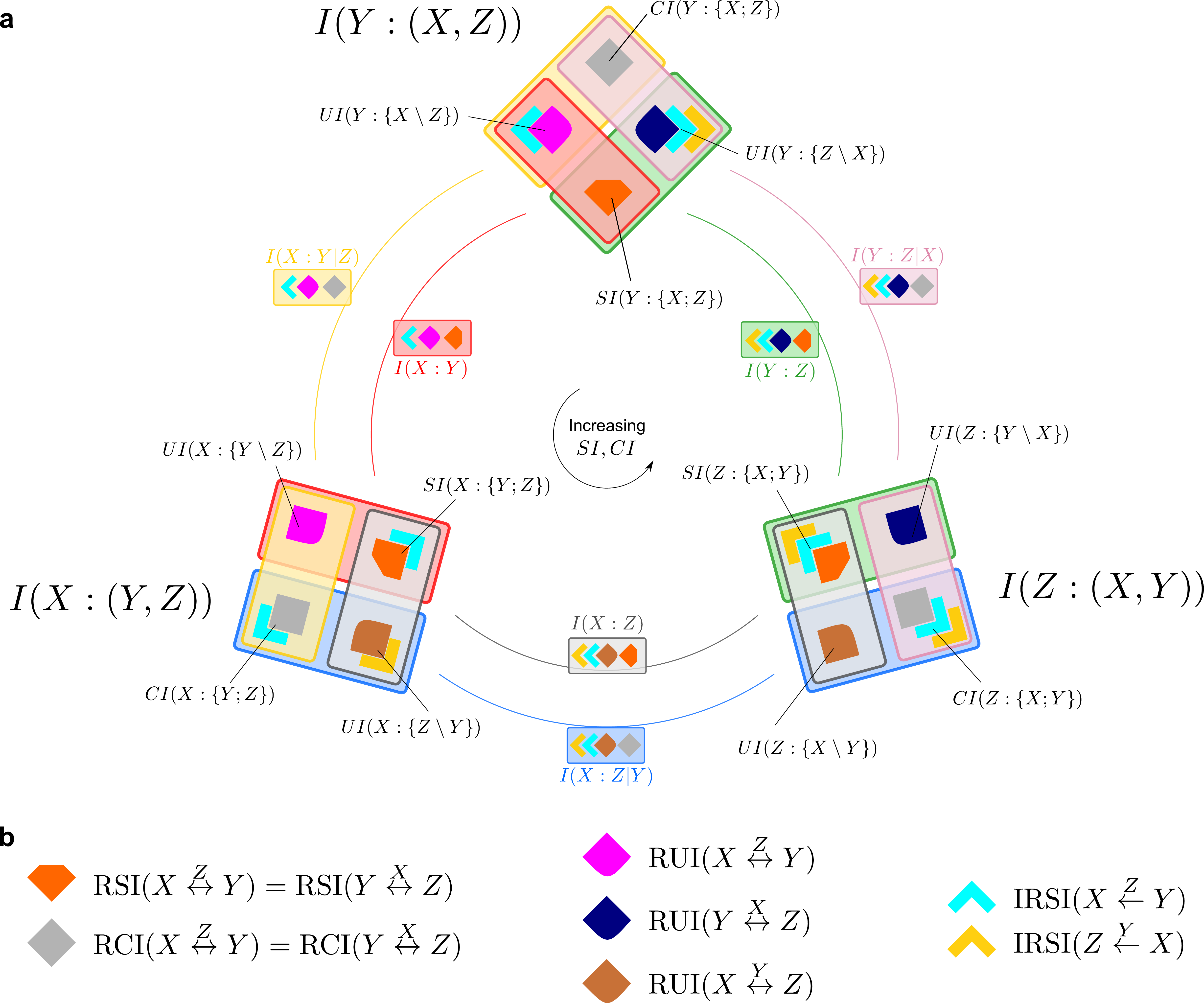}
\caption{Constructing the full structure of the three PID diagrams in Fig.~\ref{fig:PIDs} in terms of a minimal set of information subatoms. \textbf{a}: All the 12 PID atoms in Fig.~\ref{fig:PIDs} can be expressed as sums of seven independent PID subatoms that are displayed as coloured blocks in \textbf{b} (as in Fig.~\ref{fig:PIDs}, the orientations of the PIDs for $I(X:(Y,Z))$ and $I(Z:(X,Y))$ are rotated with respect to the PID for $I(Y:(X,Z))$ to highlight their reciprocal relations). Five of these subatoms are reversible pieces of information, that are included in the same kind of PID atom across different PID diagrams; the other two subatoms are the irreversible pieces of information, that can be included in different kinds of PID atom across different diagrams. Assuming, without loss of generality, that $SI(Y:\{X;Z\})\leq SI(X:\{Y;Z\})\leq SI(Z:\{X;Y\})$, the five reversible subatoms are: $RSI(X \overset{\text{Z}}{\leftrightarrow} Y)$ (orange), $RCI(X \overset{\text{Z}}{\leftrightarrow} Y)$ (gray), $RUI(X \overset{\text{Z}}{\leftrightarrow} Y)$ (magenta), $RUI(Y \overset{\text{X}}{\leftrightarrow} Z)$ (blue), and $RUI(X \overset{\text{Y}}{\leftrightarrow} Z)$ (brown). The two irreversible subatoms are $IRSI(X \overset{\text{Z}}{\leftarrow} Y)$ (light blue) and $IRSI(Z \overset{\text{Y}}{\leftarrow} X)$ (yellow).}
\label{fig:entropyconst}
\end{figure}

So far we have examined the relationships among the PID atoms corresponding to two different perspectives we hold about the system, whereby we reverse the roles of target and source between two variables in the system. We have seen that the PID atoms of different diagrams are not independent, as they are constrained by equations of the type of Eqs.~\ref{eq:pid1} and ~\ref{eq:pid2}. More specifically, the eight PID atoms of two diagrams can be expressed in terms of only six independent quantities, including reversible and irreversible pieces of information. The next question is how many subatoms we need to describe all the three possible PIDs (see Fig.~\ref{fig:PIDs}). Since there are six constraints, three equations of the type of Eq.~\ref{eq:pid1} and three equations of the type of Eq.~\ref{eq:pid2}, one may be tempted to think that the twelve PID atoms of all three PID diagrams can be expressed in terms of only six independent quantities. However, the six constraints are not independent: 
this is most easily seen from the symmetry of the co-information measure~\cite{McGill1954}, which is defined as the mutual information of two variables minus their conditional information given the third (e.g., Eq.~\ref{eq:pid1} minus Eq.~\ref{eq:pid2}). The co-information is invariant to any permutation of the variables, and this property highlights that only five of the six constraints are linearly independent. Accordingly, we will now detail how seven subatoms are sufficient to describe the whole set of PIDs: we call these subatoms the \emph{minimal subatoms' set} of the PID diagrams.

In Fig.~\ref{fig:entropyconst} we see how the minimal subatoms' set builds the PID diagrams. We assume, without loss of generality, that $SI(Y: \{X;Z\})\leq SI(X: \{Y;Z\})\leq SI(Z: \{X;Y\})$. Then, we consider the three possible instances of Eq.~\ref{eq:irsiexp} for the three possible choices of the target variable, and we find that the same ordering also holds for the synergy atoms: $CI(Y: \{X;Z\})\leq CI(X: \{Y;Z\})\leq CI(Z: \{X;Y\})$. This property is related to the invariance of the co-information: indeed, Ref.~\cite{Williams2010} indicated that the co-information can be expressed as the difference between the redundancy and the synergy within each PID diagram, i.e.
\begin{gather}
\label{eq:coi}
coI(X;Y;Z)= SI(i: \{j;k\})-CI(i: \{j;k\}),
\end{gather}
for any assignment of $X$, $Y$, $Z$, to $i$, $j$, $k$. 

These ordering relations are enough to understand the nature of the minimal subatoms' set: we start with the construction of the three redundancies, which can all be expressed in terms of the smallest $RSI$ and two subsequent increments. In Fig.~\ref{fig:entropyconst}, these correspond respectively to $RSI(X \overset{\text{Z}}{\leftrightarrow} Y)= SI(Y: \{X;Z\})$ (orange block), $IRSI(X \overset{\text{Z}}{\leftarrow} Y)$ (light blue block) and $IRSI(Z \overset{\text{Y}}{\leftarrow} X)$ (yellow block). In parallel, we can construct the three synergies with the smallest $RCI$ and the same increments used for the redundancies. In Fig. ~\ref{fig:entropyconst}, these correspond respectively to $RCI(X \overset{\text{Z}}{\leftrightarrow} Y)= CI(Y: \{X;Z\})$ (gray block) and the same two $IRSI$ used before. To construct the unique information atoms, it is sufficient to further consider the three independent $RUI$ defined by taking all possible permutations of $X$, $Y$ and $Z$ in Eq.~\ref{eq:rui1}.
In Fig.~\ref{fig:entropyconst}, these correspond to $RUI(X \overset{\text{Z}}{\leftrightarrow} Y)=UI(X:\{Y\backslash Z\})$ (magenta block), $RUI(X \overset{\text{Y}}{\leftrightarrow} Z)=UI(Z:\{X\backslash Y\})$ (brown block), and $RUI(Y \overset{\text{X}}{\leftrightarrow} Z)=UI(Z:\{Y\backslash X\})$ (blue block). 
We thus see that, in total, seven minimal subatoms are enough to build the three PID diagrams of any system. Among these seven building blocks, five are reversible pieces of information, i.e. they contribute to the the same kind of PID atom across different PID diagrams; the other two are irreversible pieces of information, that contribute to different kinds of PID atom across different diagrams. The complete minimal set can only be determined when all three PIDs are jointly considered and compared: as shown in Fig.\ref{fig:changepid}, pairwise PIDs' comparisons can at most distinguish two subatoms in any redundancy (or synergy), while the three-wise PIDs' comparison discussed above allowed us to discern three subatoms in $SI(Z:\{X;Y\})$ (and $CI(Z:\{X;Y\})$; see also Fig.\ref{fig:entropyconst}).

Importantly, while the definition of a single PID lattice relies on the specific perspective adopted on the system, which labels two variables as the sources and one variable as the target, the decomposition in Fig.~\ref{fig:entropyconst} is invariant with respect to the classification of the variables. As described above, it only relies on computing all three PID diagrams and then using the ordering relations of the atoms, without any need to classify the variables \emph{a priori}. As illustrated in Fig.~\ref{fig:entropyconst}, the decomposition of the mutual information and conditional mutual information quantities in terms of the subatoms is also independent of the PID adopted. Our invariant minimal set in Fig.~\ref{fig:entropyconst} thus extends the descriptive power of the PID framework beyond the limitations that were intrinsic to considering an individual PID diagram.
In the next sections, we will show how the invariant minimal set can be used to identify the part of the redundant information about a target that specifically arises from the mutual information between the sources (the source redundancy), and to decompose the total entropy of any trivariate system.

Remarkably, the decomposition in Fig.~\ref{fig:entropyconst} does not rely on any extension of Williams and Beer's axioms. We unveiled finer structure underlying the PID lattices just by considering more PID lattices at a time and comparing PID atoms across different lattices. We further remark that the decomposition in Fig.~\ref{fig:entropyconst} does not rely in any respect on the specific definition of the PID measures that is used to calculate the PID atoms: it only relies on the axiomatic PID construction presented in Ref.~\cite{Williams2010}.

\section{Quantifying source redundancy}
\label{sec:sourcered}

The structure of the three PID diagrams that was unveiled with the construction in Fig.~\ref{fig:entropyconst} enables a finer characterization of the modes of information distribution among three variables than what has previously been possible. In particular, we will now address the open problem of quantifying the \emph{source redundancy}, i.e. the part of the redundancy that 'must already manifest itself in the mutual information between the sources'~\cite{Harder2013}. Consider for example the redundancy $SI(X:\{Y;Z\})$ in Fig.~\ref{fig:entropyconst}: it is composed by $RSI(X \overset{\text{Z}}{\leftrightarrow} Y)$ (orange block) and $IRSI(X \overset{\text{Z}}{\leftarrow} Y)$ (light blue block). We can check which of these subatoms are shared with the mutual information of the sources $I(Y:Z)$. To do this, we have to move from the middle PID diagram in Fig.~\ref{fig:entropyconst}, that contains $SI(X:\{Y;Z\})$, to any of the other two diagrams, that both contain $I(Y:Z)$. Consistently, in these other two diagrams $I(Y:Z)$ is composed by the same four subatoms (the orange, the light blue, the yellow and the blue block), and the only difference across diagrams is that these subatoms are differently distributed between unique information and redundancy PID atoms. In particular, we can see that both the orange and the light blue block which make up $SI(X:\{Y;Z\})$ are contained in $I(Y:Z)$. Thus, whenever any of them is nonzero, we know \emph{at the same time} that $Y$ and $Z$ share some information about $X$ (i.e., $SI(X:\{Y;Z\})>0$) and that there are correlations between $Y$ and $Z$ (i.e., $I(Y:Z)>0$). Accordingly, in the scenario of Fig.~\ref{fig:entropyconst} the entire redundancy $SI(X:\{Y;Z\})$ is explained by the mutual information of the sources: all the redundant information that $Y$ and $Z$ share about $X$ arises from the correlations between $Y$ and $Z$.

If we then consider the redundancy $SI(Y:\{X;Z\})$, that coincides with the orange block, we also find that it is totally explained in terms of the mutual information between the corresponding sources $I(X:Z)$, which indeed contains an orange block. However, if we consider the third redundancy $SI(Z:\{X;Y\})$, that is composed by an orange, a light blue and a yellow block, we find that only the orange and the light blue block contribute to $I(X:Y)$, while the yellow does not. This means that if $IRSI(Z \overset{\text{Y}}{\leftarrow} X)>0$ (yellow block), then $X$ and $Y$ can share information about $Z$ (i.e., $SI(Z:\{X;Y\})>0$) even if there is no mutual information between the sources $X$ and $Y$ (i.e., $I(X:Y)=0$). Following this reasoning, we define the source redundancy that two sources $S_1$ and $S_2$ share about a target $T$ as:
\begin{equation}\label{eq:SR2}
\begin{split}
SR(T:\{S_1;S_2\}) &\coloneqq \max\{RSI(T \overset{S_2}{\leftrightarrow} S_1) \ ,\ RSI(T \overset{S_1}{\leftrightarrow} S_2) \}\\
&=\max\{\min\left[SI(T:\{S_1;S_2\}),SI(S_1:\{S_2;T\})\right],\\
&\qquad\qquad\min\left[SI(T:\{S_1;S_2\}),SI(S_2:\{S_1;T\})\right] \}.
\end{split}
\end{equation}
One can easily verify that Eq.~\ref{eq:SR2} identifies the blocks that belong to both $SI(T,\{S_1,S_2\})$ and $I(S_1:S_2)$ in Fig.~~\ref{fig:entropyconst}, for any choice of sources and target (for instance, $T=Z$, $S_1=X$ and $S_2=Y$).
This definition can be justified as follows: each $RSI$ measure in Eq.~\ref{eq:SR2} compares $SI(T:\{S_1;S_2\})$ with one of the other two redundancies that are contained in the mutual information between the sources $I(S_1:S_2)$ (namely, $SI(S_1:\{S_2;T\})$ and $SI(S_2:\{S_1;T\})$). Some of the subatoms included in $I(S_1:S_2)$ are contained in one of these two redundancies, but not in the other, as they move to the unique information mode when we change PID.
Therefore, by taking the maximum in Eq.~\ref{eq:SR2} we ensure that $SR(T:\{S_1;S_2\})$ captures all the common subatoms of $I(S_1:S_2)$ and $SI(T:\{S_1;S_2\})$.
In a complementary way, we can define the non-source redundancy that two sources share about a target:
\begin{equation}\label{eq:NSR}
NSR(T:\{S_1;S_2\}) \coloneqq SI(T:\{S_1;S_2\})-SR(T:\{S_1;S_2\}).
\end{equation}

\begin{figure}[t!]
\centering
\begin{overpic}[trim={2cm 2cm 3.5cm 3cm},clip,width=.8\textwidth]{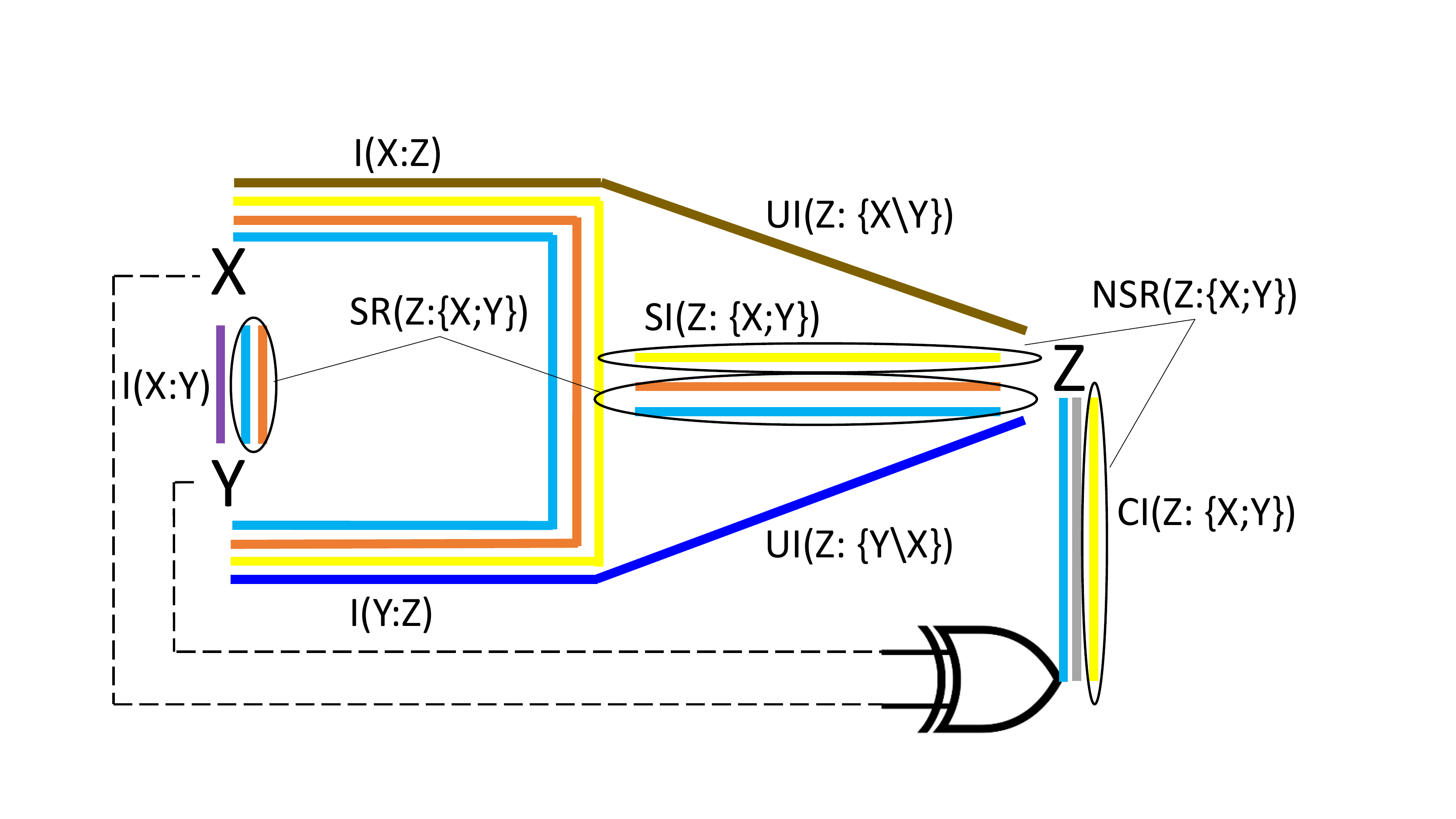}
\end{overpic}
\caption{Exploded view of the information that two variables $X, Y$ carry about a third variable $Z$. The mutual informations of each source about the target are decomposed into PID atoms as in Fig.~\ref{fig:PIDb}, but the PID atoms are now further decomposed in terms of the minimal subatoms' set: the thick colored lines represent the subatoms with the same colour code as in Fig.~\ref{fig:entropyconst}. Here, we assume that the variables are ordered as in Fig.~\ref{fig:entropyconst}. The finer structure of the PID atoms allows us to identify the source redundancy $SR(Z:\{X;Y\})$ (orange line + light blue line) as the part of the full redundancy $SI(Z:\{X;Y\})$ that is apparent in the mutual information between the sources $I(X:Y)$. Instead, the amount of information in the non-source redundancy $NSR(Z:\{X;Y\})$ (yellow line) also appears in the synergy $CI(Z:\{X;Y\})$.
}
\label{fig:circuit_sr}
\end{figure}

\subsection{The difference between source and non-source redundancy}

Eqs.~\ref{eq:SR2} and ~\ref{eq:NSR} show how we can split the redundant information that two sources share about a target into two nonnegative information components: when the source-redundancy $SR$ is larger than zero there are also correlations between the sources, while the non-source redundancy $NSR$ can be larger than zero even when the sources are independent. $SR$ is thus seen to quantify the \emph{pairwise correlations between the sources that also produce redundant information about the target}: this discussion is pictorially summarized in Fig.~\ref{fig:circuit_sr}. In particular, the source redundancy $SR$ is clearly upper-bounded by the mutual information between the sources, i.e.
\begin{equation}\label{eq:srbound}
SR(T:\{S_1;S_2\}) \leq I(S_1:S_2).
\end{equation}
On the other hand, $NSR$ does not arise from the pairwise correlations between the sources: let us calculate $NSR$ in a paradigmatic example that was proposed in Ref.~\cite{Harder2013} to remark the subtle possibility that two statistically independent variables can share information about a third variable. Suppose that $Y$ and $Z$ are uniform binary random variables, with $Y \independent Z$, and $X$ is deterministically fixed by the relationship $X= Y \wedge Z$. Here, $SI(X:\{Y;Z\}) \approx 0.311$ bit (according to different measures of redundancy ~\cite{Harder2013,Bertschinger2014}) even if $I(Y:Z)=0$. Indeed, from our definitions in Eqs. ~\ref{eq:SR2} and ~\ref{eq:NSR} we find that, since $SI(Y:\{X;Z\}), SI(Z:\{X;Y\})\leq I(Y:Z)=0$, here $NSR(X:\{Y;Z\})=SI(X:\{Y;Z\})>0$ even though $I(Y:Z)=0$. We will comment more extensively on this instructive example in Section ~\ref{sec:characterizing}.

Interestingly, the non-source redundancy $NSR$ is a part of the redundancy that is related to the synergy of the same PID diagram. Indeed, two of the three possible $NSR$ defined in Eq.~\ref{eq:NSR} are always zero, and the third
can be larger than zero if and only if the yellow block in Fig.~\ref{fig:entropyconst} is larger than zero. From Fig.~\ref{fig:entropyconst} and Fig.~\ref{fig:circuit_sr} we can thus see that, whenever we find positive non-source redundancy in a PID diagram, the same amount of information (the yellow block) is also present in the synergy of that diagram. 
Thus, while there is source redundancy if and only if there is mutual information between the sources, the existence of non-source redundancy is a sufficient (though not necessary) condition for the existence of synergy. We can thus interpret $NSR$ as \emph{redundant information about the target that implies that the sources carry synergistic information about the target}: we give a graphical characterization of $NSR$ in Fig.~\ref{fig:circuit_sr}.

In the specific examples considered in~\cite{Harder2013}, where the underlying causal structure of the system is such that the sources always generate the target, the non-source redundancy can indeed be associated with the notion of 'mechanistic redundancy' that was introduced in that work: the causal mechanisms connecting the target with the sources induce a non-zero $NSR$ that contributes to the redundancy independently of the correlations between the sources. In general, since the causal structure of the analyzed system is unknown, it is impossible to quantify 'mechanistic redundancy' with statistical measures, while it is always possible to quantify and interpret the non-source redundancy as described in Section~\ref{sec:sourcered}.

In section~\ref{sec:characterizing} we will examine concrete examples to show how our definitions of source and non-source redundancy refine the information-theoretic description of trivariate systems, as they quantify qualitatively different ways that two variables can share information about a third.

We conclude this Section with more general comments about our quantification of source redundancy. We note that the arguments used to define the source redundancy in Sec.~\ref{sec:sourcered} can be equally used to study common or exclusive information components of other PID terms. For example, we can identify the magenta subatom as the component of the mutual information between the sources $X$ and $Y$ that cannot be related to their redundant information about $Z$. Similarly, we could consider which part of a synergy is related to the conditional mutual information between the sources. 

\section{Decomposing the joint entropy of a trivariate system}

Understanding how information is distributed in trivariate systems should also provide a descriptive allotment of all parts of the joint entropy $H(X,Y,Z)$ ~\cite{James2016,Rosas2016}. For comparison, Shannon's mutual information enables a semantic decomposition of the bivariate entropy $H(X,Y)$ in terms of univariate conditional entropies and $I(X:Y)$, that quantifies shared fluctuations (or covariations) between the two variables~\cite{Rosas2016}:
\begin{equation}\label{eq:entropydec2}
H(X,Y)=I(X:Y)+H(X|Y)+H(Y|X).
\end{equation}
However, in spite of recent efforts~\cite{Rosas2016}, a univocal descriptive decomposition of the trivariate entropy $H(X,Y,Z)$ is still missing to date. Since the PID axioms in Ref.~\cite{Williams2010} decompose mutual information quantities, one might hope that the PID atoms could also provide a descriptive entropy decomposition.
Yet, at the beginning of Section~\ref{sec:morepids}, we pointed out that a single PID lattice does not include the mutual information between the sources and their conditional mutual information given the target: this suggests that a single PID lattice cannot in general contain the full $H(X,Y,Z)$. More concretely, Ref.~\cite{James2016} has recently suggested precise examples of trivariate dependencies where a single PID lattice cannot account for, and thus describe the parts of, the full $H(X,Y,Z)$.

These examples compare the dyadic and triadic dependencies described in Fig.~\ref{fig:dyadic}. 
Both kinds of dependencies underlie common modes of information sharing among three and more variables~\cite{James2016}.
Ref.~\cite{James2016} remarked that the atoms of a single PID diagram are indeed able to distinguish between dyadic and triadic dependencies, but that such atoms only sum up to two of the three bits of the full $H(X,Y,Z)$.

\begin{figure}[t!]
\centering
\begin{subfigure}{.49\linewidth}
\centering
\begin{overpic}[trim={4cm 3cm 4cm 1cm},clip,width=\linewidth]{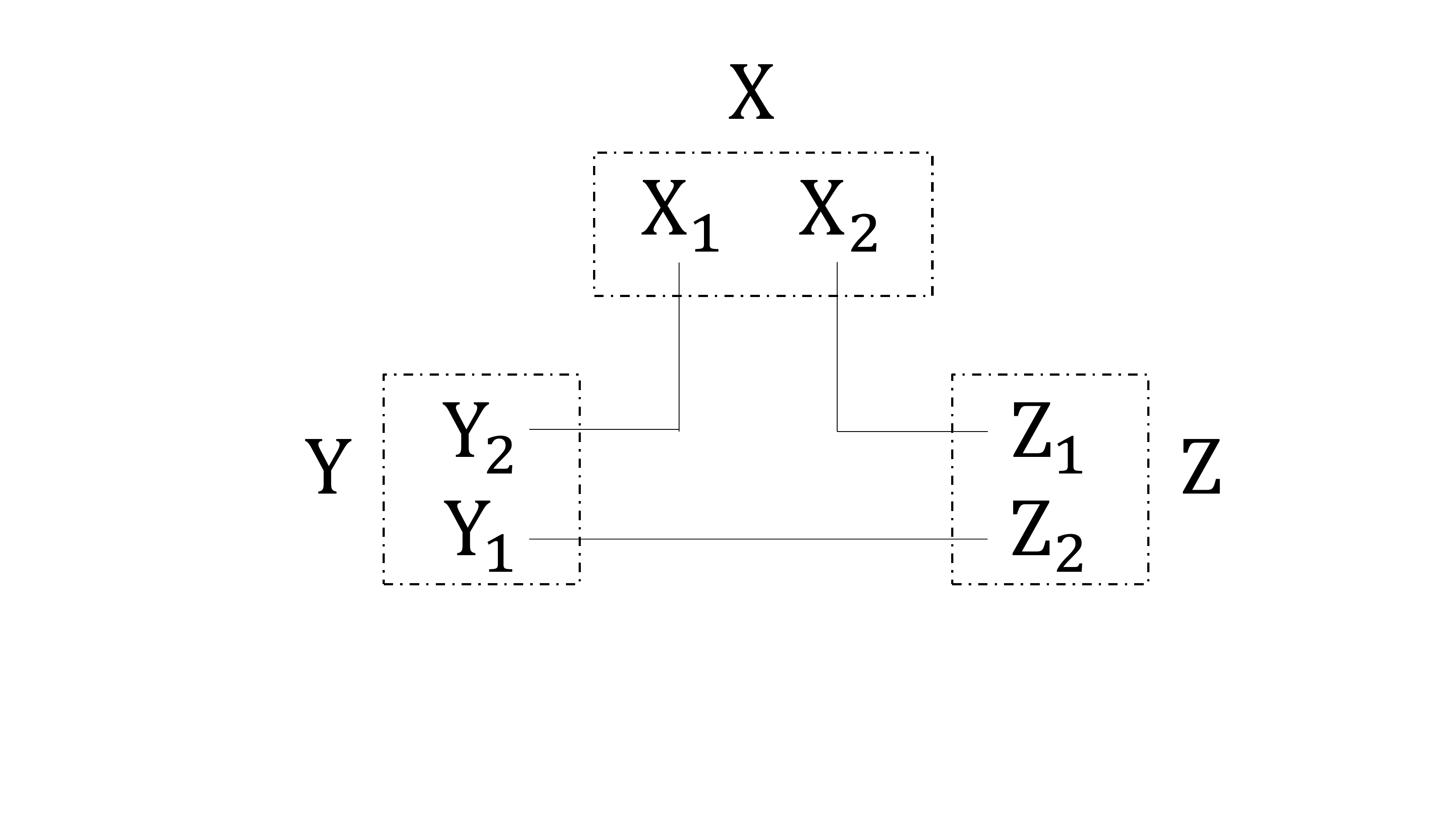}
\end{overpic}
\begin{tabular}{ c  c  c c }
  \hline			
x & y & z & $p(x,y,z)$ \\
\hline
0 & 0 & 0 & 1/8 \\
0 & 2 & 1 & 1/8 \\
1 & 0 & 2 & 1/8 \\
1 & 2 & 3 & 1/8 \\
2 & 1 & 0 & 1/8 \\
2 & 3 & 1 & 1/8 \\
3 & 1 & 2 & 1/8 \\
3 & 3 & 3 & 1/8 \\
  \hline
\end{tabular}
\caption{}
\label{tab:dyadic}
\end{subfigure}
\begin{subfigure}{.49\linewidth}
\centering
\begin{overpic}[trim={4cm 3cm 4cm 1cm},clip,width=\linewidth]{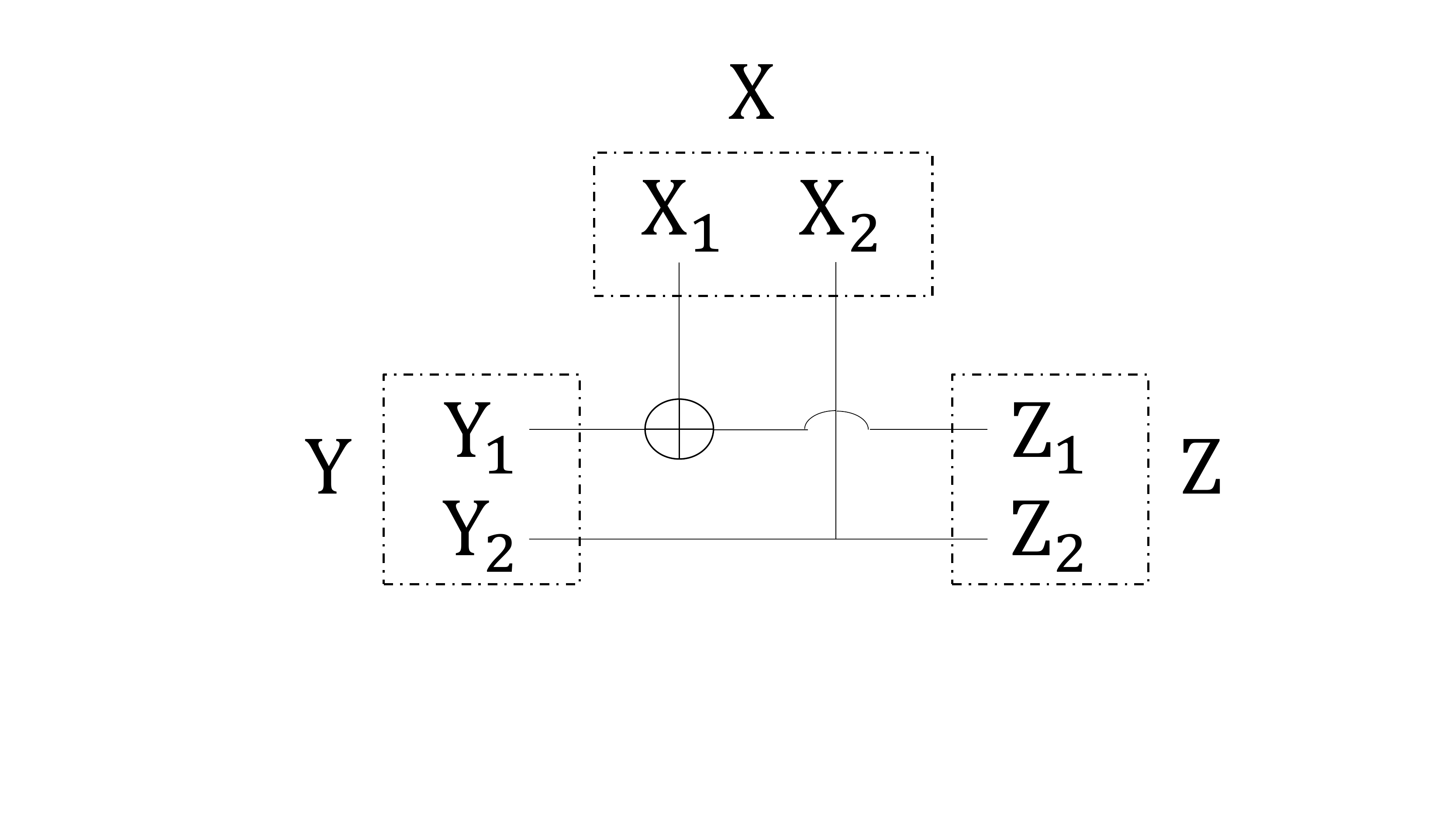}
\end{overpic}
\begin{tabular}{ c  c  c c }
  \hline			
x & y & z & $p(x,y,z)$ \\
\hline
0 & 0 & 0 & 1/8 \\
1 & 1 & 1 & 1/8 \\
0 & 2 & 2 & 1/8 \\
1 & 3 & 3 & 1/8 \\
2 & 0 & 2 & 1/8 \\
3 & 1 & 3 & 1/8 \\
2 & 2 & 0 & 1/8 \\
3 & 3 & 1 & 1/8 \\
  \hline
\end{tabular}
\caption{}
\label{tab:triadic}
\end{subfigure}
\caption{Dyadic and triadic statistical dependencies in a trivariate system, as defined in Ref.~\cite{James2016}. 
The tables display the non-zero probability values $p(x,y,z)$ as a function of the possible outcomes of the three stochastic variables $X, Y, Z$ with domain $\{0, 1 , 2, 3 \}$. $X\sim (X_1, X_2)$, $Y\sim(Y_1, Y_2)$, $Z\sim(Z_1, Z_2)$ (the symbol $\sim$ here means 'is distributed as'), where $X_1, X_2, Y_1, Y_2, Z_1, Z_2$ are  binary uniform random variables. \textbf{a)}: The underlying rules that give rise to dyadic dependencies are $X_1=Y_2$, $Y_1=Z_2$, $Z_1=X_2$. \textbf{b)}: The underlying rules that give rise to triadic dependencies are $X_1=Y_1 \oplus Z_1 $, $X_2=Y_2=Z_2$.
}
\label{fig:dyadic}
\end{figure}

We will now show how the missing allotment of the third bit of entropy in those systems is not due to intrinsic limitations of the PID axioms, but just to the limitations of considering a single PID diagram at a time --- the common practice in the literature so far. More generally, we will be able to allot and describe the full entropy $H(X,Y,Z)$ of any trivariate system in terms of the novel finer structure unveiled in Section~\ref{sec:morepids}.

\subsection{The finer structure of the entropy $H(X,Y,Z)$}
The minimal subatoms' set that we illustrated in Fig.~\ref{fig:entropyconst} allowed us to decompose all three PID lattices of a generic system. However, to fully describe the distribution of information in trivariate systems, we also wish to find a generalization of Eq.~\ref{eq:entropydec2} to the trivariate case, i.e. to decompose the full trivariate entropy $H(X,Y,Z)$ in terms of univariate conditional entropies and PID quantities.
With this goal in mind, we first subtract from $H(X,Y,Z)$ the terms which describe statistical fluctuations of only one variable (conditioned on the other two). The sum of these terms was indicated as $H_{(1)}$ in Ref.~\cite{Rosas2016}, and there quantified as
\begin{equation}
H_{(1)}=H(X|Y,Z)+H(Y|X,Z)+H(Z|Y,X). \label{eq:h1}
\end{equation}
This subtraction is useful because $H_{(1)}$ is a part of the total entropy which does not overlap with any of the 12 PID atoms in Fig.~\ref{fig:PIDs}. The remaining entropy $H(X,Y,Z)-H_{(1)}$ was defined as the dual total correlation in Ref.~\cite{Han1978} and recently considered in Ref.~\cite{Rosas2016}:
\begin{equation}\label{eq:entropydec}
DTC\equiv  H(X,Y,Z)-H_{(1)}.
\end{equation}
$DTC$ quantifies joint statistical fluctuations of more than one variable in the system. A simple calculation yields
\begin{equation}\label{eq:dtc}
DTC=I(X:Y|Z)+I(Y:Z|X)+I(X:Z|Y)+ coI(X;Y;Z),
\end{equation}
which is manifestly invariant under permutations of $X$, $Y$ and $Z$, and shows that $DTC$ can be written as a sum of some of the $12$ PID atoms. For example, expressing the co-information as the difference $I(X:Z) - I(X:Z|Y)$, we can arbitrarily use the four atoms from the left-most diagram in Fig.~\ref{fig:PIDs} to decompose the sum $I(X:Y|Z)+I(X:Z)$ and then add $UI(Y:\{Z\backslash X\})+CI(Y:\{X;Z\})$ from the middle diagram to decompose $I(Y:Z|X)$. If we then plug this expression of $DTC$ in Eq.~\ref{eq:entropydec}, we achieve a decomposition of the full entropy of the system in terms of $H_{(1)}$ and PID quantities: 
\begin{align}\label{eq:entropydecfine}
H(X,Y,Z) & = H_{(1)}+SI(X: \{Y;Z\} )+UI(X: \{Y \backslash Z\} ) + UI(X: \{Z \backslash Y\} )+ \nonumber \\ & + CI(X: \{Y;Z\} )+ UI(Y:\{Z\backslash X\})+CI(Y:\{X;Z\}),
\end{align}
which provides a nonnegative decomposition of the total entropy of any trivariate system. However, this decomposition is not unique, since the co-information can be expressed in terms of different pairs of conditional and unconditional mutual informations, according to Eq.~\ref{eq:coi}. This arbitrariness strongly limits the descriptive power of this kind of entropy decompositions, because the PID atoms on the RHS can only be interpreted within individual PID perspectives.

\begin{figure}[t!]
\centering
\includegraphics[width=.7\textwidth]{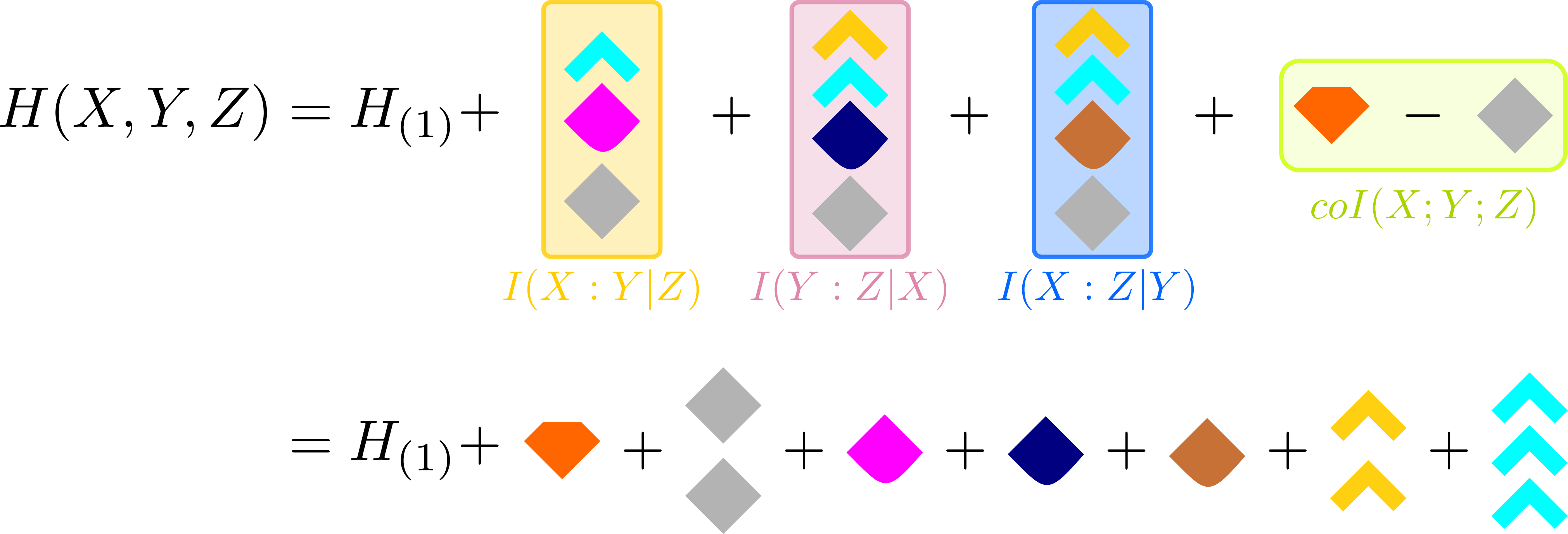}
\caption{The joint entropy of the system $H(X,Y,Z)$ is decomposed in terms of the minimal set identified in Fig.~\ref{fig:entropyconst}, once the univariate fluctuations quantified with $H_{(1)}$ have been subtracted out (see Eq.~\ref{eq:entropydecfinal}). As in Fig.~\ref{fig:entropyconst}, we assume without loss of generality $SI(Y: \{X;Z\})\leq SI(Z: \{X;Y\})\leq SI(X: \{Y;Z\})$. The finer PID structure unveiled in this work enables a general entropy decomposition in terms of quantities that can be interpreted without relying on a specific PID point of view, even though they have been defined within the PID framework. The colored areas represent the Shannon information quantities that are included in the $DTC$ of Eq.~\ref{eq:dtc}, with the same color code of Fig.~\ref{fig:PIDs}.}
\label{fig:dec_entropy_final}
\end{figure}

To address this issue, we construct a less arbitrary entropy decomposition by using the invariant minimal subatoms' set that was presented in Fig.~\ref{fig:entropyconst}: importantly, that set can be interpreted without specifying an individual, and thus partial, PID point of view that we hold about the system.

Thus, after we name the variables of the system such that $SI(Y:\{X;Z\})\leq SI(X:\{Y;Z\}) \leq SI(Z:\{X;Y\})$, we express the coarser PID atoms in Eq.~\ref{eq:entropydecfine} in terms of the minimal set to obtain:
\begin{align}\label{eq:entropydecfinal}
H(X,Y,Z) -  H_{(1)} & = RSI(Y \overset{\text{Z}}{\leftrightarrow} X)+ 2 \hspace{1mm} RCI(Y \overset{\text{Z}}{\leftrightarrow} X)+ \\
& + RUI(X \overset{\text{Z}}{\leftrightarrow} Y) + RUI(Y \overset{\text{X}}{\leftrightarrow} Z) + RUI(X \overset{\text{Y}}{\leftrightarrow} Z) + \nonumber \\
& + 2 \hspace{1mm} IRSI(Z \overset{\text{Y}}{\leftarrow} X) + 3 \hspace{1mm} IRSI(X \overset{\text{Z}}{\leftarrow} Y).\nonumber 
\end{align}
Unlike Eq.~\ref{eq:entropydecfine}, the entropy decomposition expressed in Eq.~\ref{eq:entropydecfinal} and illustrated in Fig.~\ref{fig:dec_entropy_final} fully describes the distribution of information in trivariate systems without the need of a specific perspective about the system. Importantly, this decomposition is unique: even though the co-information can be expressed in different ways in terms of conditional and unconditional mutual informations, in terms of the subatoms it is uniquely represented as the orange block minus the gray block (see Fig.~\ref{fig:dec_entropy_final}). Similarly, the conditional mutual information terms of Eq.~\ref{eq:dtc} are composed by the same blocks independently of the PID, as highlighted in Fig.~\ref{fig:entropyconst}.

\subsection{Describing $H(X,Y,Z)$ for dyadic and triadic systems}

To test the usefulness of the finer entropy decomposition in Eq.~\ref{eq:entropydecfinal}, we now compute its terms for the dyadic and the triadic dependencies considered in Ref.~\cite{James2016} and defined in Figure~\ref{fig:dyadic}. In both cases $H_{(1)}=0$.
For the dyadic system, there are only three positive quantities in the minimal set: the three reversible unique informations $RUI(X \overset{\text{Y}}{\leftrightarrow} Z)=RUI(Y \overset{\text{X}}{\leftrightarrow} Z)=RUI(X \overset{\text{Z}}{\leftrightarrow} Y)=1$ bit. For the triadic system, there are only two positive quantities in the minimal set: $RSI(X \overset{\text{Z}}{\leftrightarrow} Y)=RCI(X \overset{\text{Z}}{\leftrightarrow} Y)=1$ bit, but $RCI(X \overset{\text{Z}}{\leftrightarrow} Y)$ is counted twice in the $DTC$. We illustrate the resulting entropy decompositions, according to Eq.~\ref{eq:entropydecfinal}, in Fig.~\ref{fig:dec_entropy_dyadic}.
\begin{figure}[t!]
\centering
\includegraphics[width=.7\textwidth]{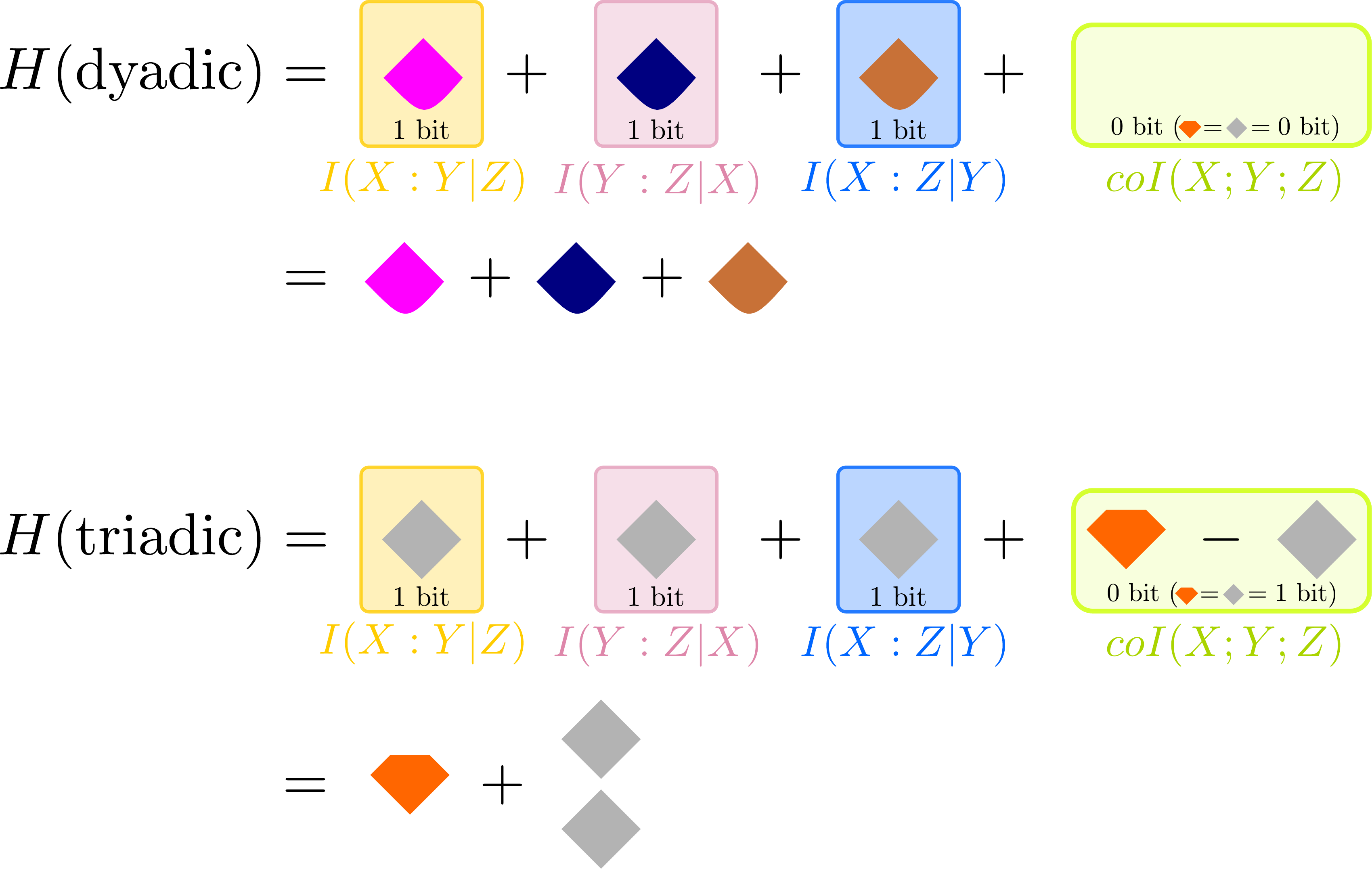}
\caption{The joint entropy $H(X,Y,Z)=3$ bit of a dyadic (upper panel) and a triadic (lower panel) system, as defined in Fig.~\ref{fig:dyadic}, is decomposed in terms of the minimal set as illustrated in Fig.~\ref{fig:dec_entropy_final}. In the dyadic system, $H(X,Y,Z)$ is decomposed into three pieces of (reversible) unique information: each variable contains 1 bit of unique information with respect to the second variable about the third variable. In the triadic system, $H(X,Y,Z)$ is decomposed into one bit of information shared among all three variables (the reversible redundancies) and two bits of (reversible) synergistic information due to the three-wise XOR structure. 
}
\label{fig:dec_entropy_dyadic}
\end{figure}

The decompositions in Fig.~\ref{fig:dec_entropy_dyadic} enable a clear interpretation of how information is finely distributed within dyadic and triadic dependencies.
The three bits of the total $H(X,Y,Z)$ in the dyadic system are seen to be distributed equally among unique information modes: each variable contains 1 bit of unique information with respect to the second variable about the third variable. Further, these unique information terms are all reversible, which reflects the symmetry of the system under pairwise swapping of the variables. This description provides a simple and accurate summary of the total entropy of the dyadic system, which matches the dependency structure illustrated in Fig.~\ref{tab:dyadic}.

The three bits of the total $H(X,Y,Z)$ in the triadic system consist of one bit of the smallest
reversible redundancy and two bit of the smallest
reversible synergy, since the latter appears twice in $H(X,Y,Z)$. Again, the reversible nature of these pieces of information reflects the symmetry of the system under pairwise swapping of the variables. Further, the bit of reversible redundancy represents the bit of information that is redundantly available to all three variables, while the two bit of reversible synergy are due to the three-wise XOR structure (see Fig.~\ref{tab:triadic}).
Why does the XOR structure provide two bits of synergistic information? Because if $X=Y \oplus Z$ then the only positive quantity in the set of subatoms in Fig.~\ref{fig:entropyconst}b is the smallest $RCI$, which however appears twice in the entropy $H(X,Y,Z)$. Importantly, these two bits of synergy do not come from the same PID diagram: our entropy decomposition in Eq.~\ref{eq:entropydecfinal} could account for both bits only because it fundamentally relies on cross-comparisons between different PID diagrams, as illustrated in Fig.~\ref{fig:entropyconst}.

\section{Applications of the finer structure of the PID framework}\label{sec:characterizing}

The aim of this Section is to show the additional insights that the finer structure of the PID framework, unveiled in Section ~\ref{sec:structure} and Fig.~\ref{fig:entropyconst}, can bring to the analysis of trivariate systems. We examine paradigmatic examples of trivariate systems and calculate the novel PID quantities of source and non-source redundancy that we described in Sec.~\ref{sec:sourcered}. Most of these examples have been considered in the literature ~\cite{Harder2013,Griffith2014,Bertschinger2014,Barrett2015,Rosas2016} to validate the definitions, or to suggest interpretations, of the PID atoms. We also discuss how $SR$ matches the notion of source redundancy introduced in Ref.~\cite{Harder2013} and discussed in Ref.~\cite{Wibral2017}. Finally, we suggest and motivate a practical interpretation of the reversible redundancy subatom $RSI$.

Even though our definitions of the minimal subatoms' set in Fig.~\ref{fig:entropyconst} only rely on Williams and Beer's axioms, some of the examples below will require a specific definition of the PID atoms that goes beyond those axioms. In those cases, our computations rely on the definitions of PID that were proposed in Ref.~\cite{Bertschinger2014}, which are the most widely accepted in the literature for trivariate systems. Where numerical computations of the PID atoms are involved, they have been performed with a software package that will be publicly released upon publication of the present work.

\subsection{Computing source and non-source redundancy}
\subsubsection{Copying --- the redundancy arises entirely from source correlations} \label{sec:copy}
\begin{figure}[t!]
\centering
\begin{subfigure}{.49\linewidth}
\begin{overpic}[trim={0cm 1cm 1cm 0cm},clip,width=\linewidth]{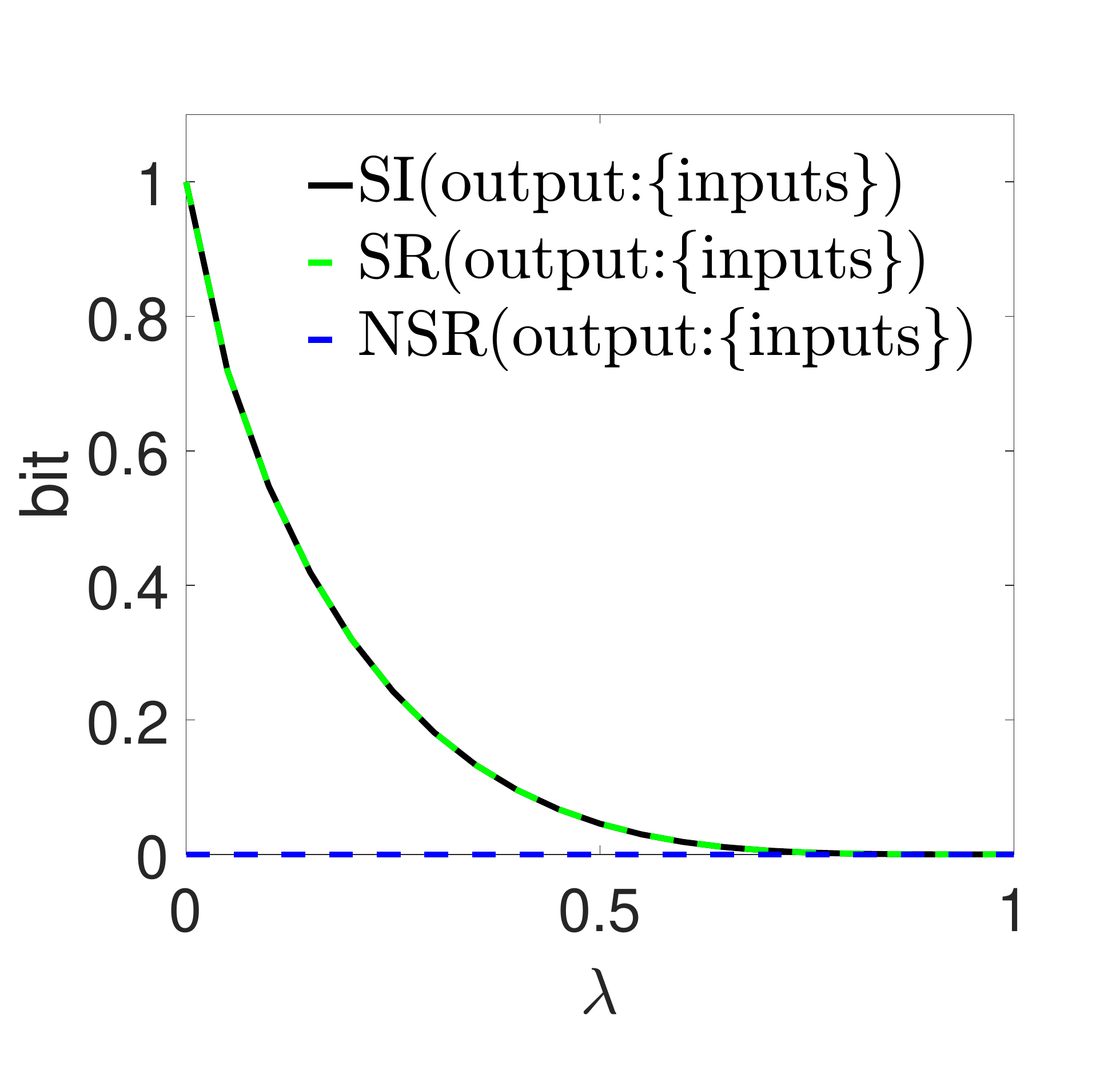}
\end{overpic}
\caption{}
\label{fig:copy}
\end{subfigure}
\begin{subfigure}{.49\linewidth}
\begin{overpic}[trim={0cm 1cm 1cm 0cm},clip,width=\linewidth]{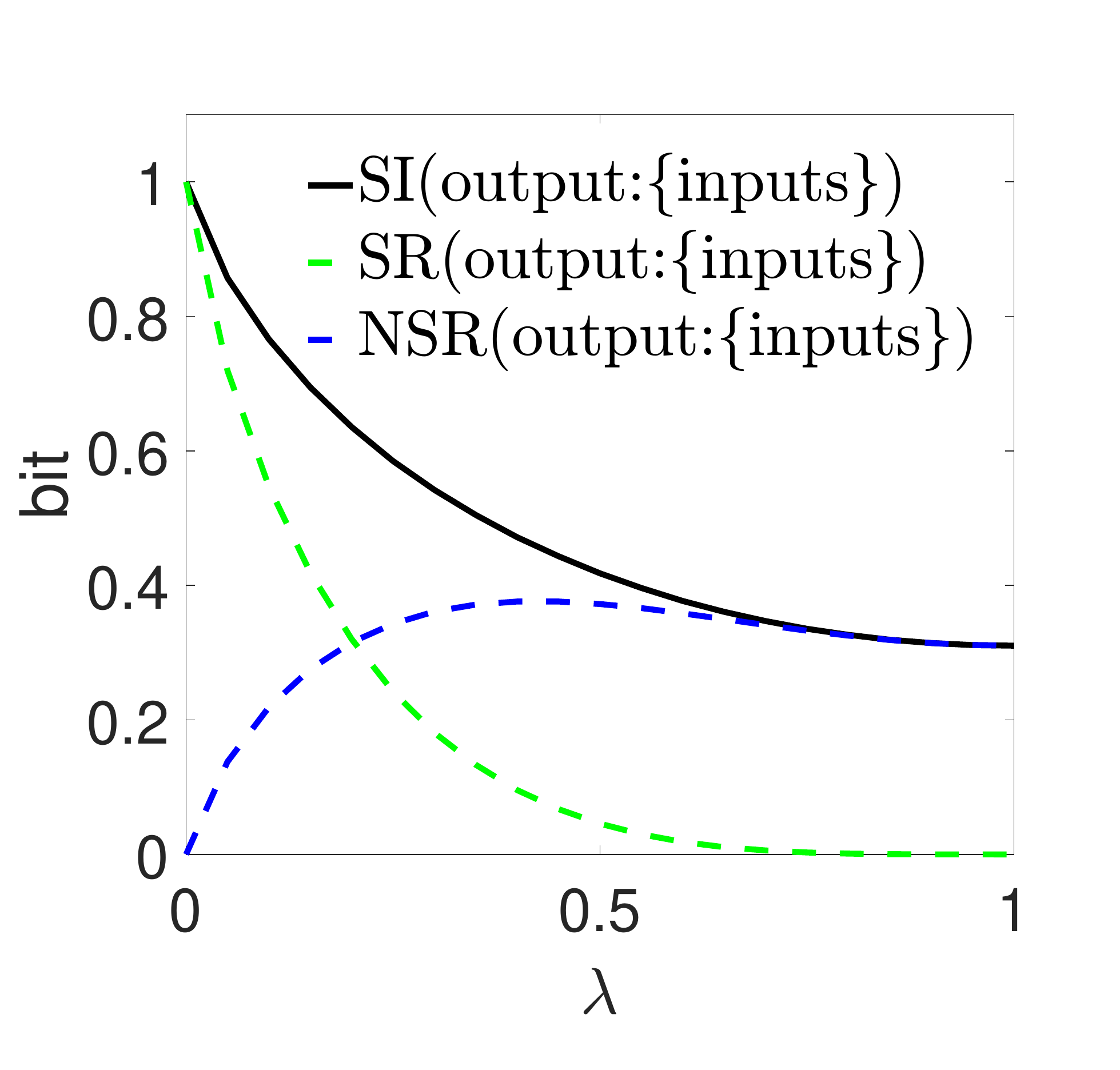}
\end{overpic}
\caption{}
\label{fig:and}
\end{subfigure}
\caption{The binary random variables $Y$ and $Z$ are uniformly distributed inputs that determine the output $X$ as $X=(Y,Z)$ in panel \textbf{a)} and as $X=Y \wedge Z$ in panel \textbf{b)}. Correlations between $Y$ and $Z$ decrease with increasing $\lambda$, from perfect correlation ($\lambda=0$) to perfect independence ($\lambda=1$). The full redundancy $SI(X:\{Y;Z\})$, the source redundancy $SR(X:\{Y;Z\})$ and the non-source redundancy $NSR(X:\{Y;Z\})$ of the inputs about the output are plotted as a function of $\lambda$. \textbf{a} Since the output variable $X$ copies the inputs $(Y, Z)$, all of $SI(X:\{Y;Z\})$ can only come from the correlations between the inputs, which is reflected in $NSR(X:\{Y;Z\})$ being identically 0 for all values of $\lambda$. \textbf{b} For all values of $\lambda>0$ we find $NSR(X:\{Y;Z\})>0$, i.e. there is a part of the redundancy $SI(X:\{Y;Z\})$ which does not arise from the correlations between the inputs $Y$ and $Z$. Accordingly, $NSR(X:\{Y;Z\})$ also appears in the synergy $CI(X:\{Y;Z\})$. }
\end{figure}
Consider a system where $Y$ and $Z$ are random binary variables that are correlated according to a control parameter $\lambda$~\cite{Harder2013}. For example, consider a uniform binary random variable $W$ that 'drives' both $Y$ and $Z$ with the same strength $\lambda$. More precisely, $p(y|w)=\lambda/2+(1-\lambda) \delta_{yw}$ and $p(z|w)=\lambda/2+(1-\lambda) \delta_{zw}$~\cite{Harder2013}. This system is completed by taking $X=(Y,Z)$, i.e. a two-bit random variable that reproduces faithfully the joint outcomes of the generating variables $Y, Z$.

We consider the inputs $Y$ and $Z$ as the PID sources and the output $X$ as the PID target, thus selecting the left-most PID diagram in Fig.~\ref{fig:PIDs}. Fig.~\ref{fig:copy} shows our calculations of the full redundancy $SI(X: \{Y;Z\})$, the source redundancy $SR(X: \{Y;Z\})$ and the non-source redundancy $NSR(X: \{Y;Z\})$, based on the definitions in Ref.~\cite{Bertschinger2014}. The parameter $\lambda$ is varied between $\lambda=0$, corresponding to $Y=Z$, and $\lambda=1$, corresponding to $Y \independent Z$. Since $NSR(X: \{Y;Z\})=0$ for any $0\leq \lambda\leq 1$, we interpret that all the redundancy $SI(X: \{Y;Z\})$ arises from the correlations between $Y$ and $Z$ (which are tuned with $\lambda$). This is indeed compatible with the discussion in Ref.~\cite{Harder2013}, where the authors argued that in the 'copying' example the entire redundancy should be already apparent in the sources.

\subsubsection{AND gate: the redundancy is not entirely related to source correlations}

Consider a system where the correlations between two binary random variables, the inputs $Y$ and $Z$, are described by the control parameter $\lambda$ as in~\ref{sec:copy}, but the output $X$ is determined by the $AND$ function as $X=Y \wedge Z$~\cite{Harder2013}. As the causal structure of the system would suggest, we consider the inputs $Y$ and $Z$ as the PID sources and the output $X$ as the PID target, thus selecting the left-most PID diagram in Fig.~\ref{fig:PIDs}. Fig.~\ref{fig:and} shows our calculations of the full redundancy $SI(X: \{Y;Z\})$, the source redundancy $SR(X: \{Y;Z\})$ and the non-source redundancy $NSR(X: \{Y;Z\})$, based on the definitions in Ref.~\cite{Bertschinger2014}.

$SR$ and $NSR$ now show a non-trivial behavior as a function of the $Y-Z$ correlation parameter $\lambda$. If $\lambda=0$ and thus $Y=Z$, the full redundancy is made up entirely of source redundancy --- trivially, both $SI(X: \{Y;Z\})$ and $SR(X: \{Y;Z\})$ equal the mutual information $I(Y:Z)=H(Y)=1$ bit. When $\lambda$ increases, the full redundancy decreases monotonically to its minimum value of $\approx 0.311$ bit for $\lambda=1$ (when $Y \independent Z$). 
Importantly, $SR(X: \{Y;Z\})$ decreases monotonically as a function of $\lambda$ to its minimum value of zero bit when $\lambda=1$: this behavior is indeed expected from a measure that quantifies correlations between the sources that also produce redundant information about the target (see Section~\ref{sec:sourcered}). On the other hand, $NSR(X: \{Y;Z\})>0$ for $\lambda>0$, and it increases as a function of $\lambda$. When $\lambda=1$, i.e. when $Y \independent Z$, $NSR$ corresponds to the full redundancy. This is compatible with our description of non-source redundancy (see Section ~\ref{sec:sourcered}) as redundancy that is not related to the source correlations; indeed, $NSR>0$ implies that the sources also carry synergistic information about the target (here, due to the relationship $X=Y \wedge Z$). $NSR$ thus also quantifies the notion of mechanistic redundancy that was introduced in Ref.~\cite{Harder2013} with reference to this scenario.

\subsubsection{ Dice sum: tuning irreversible redundancy}\label{section:dice}
\begin{figure}[t!]
\begin{subfigure}{.49\textwidth}
\begin{overpic}[trim={0 1cm 1cm 0cm},clip,width=\linewidth]{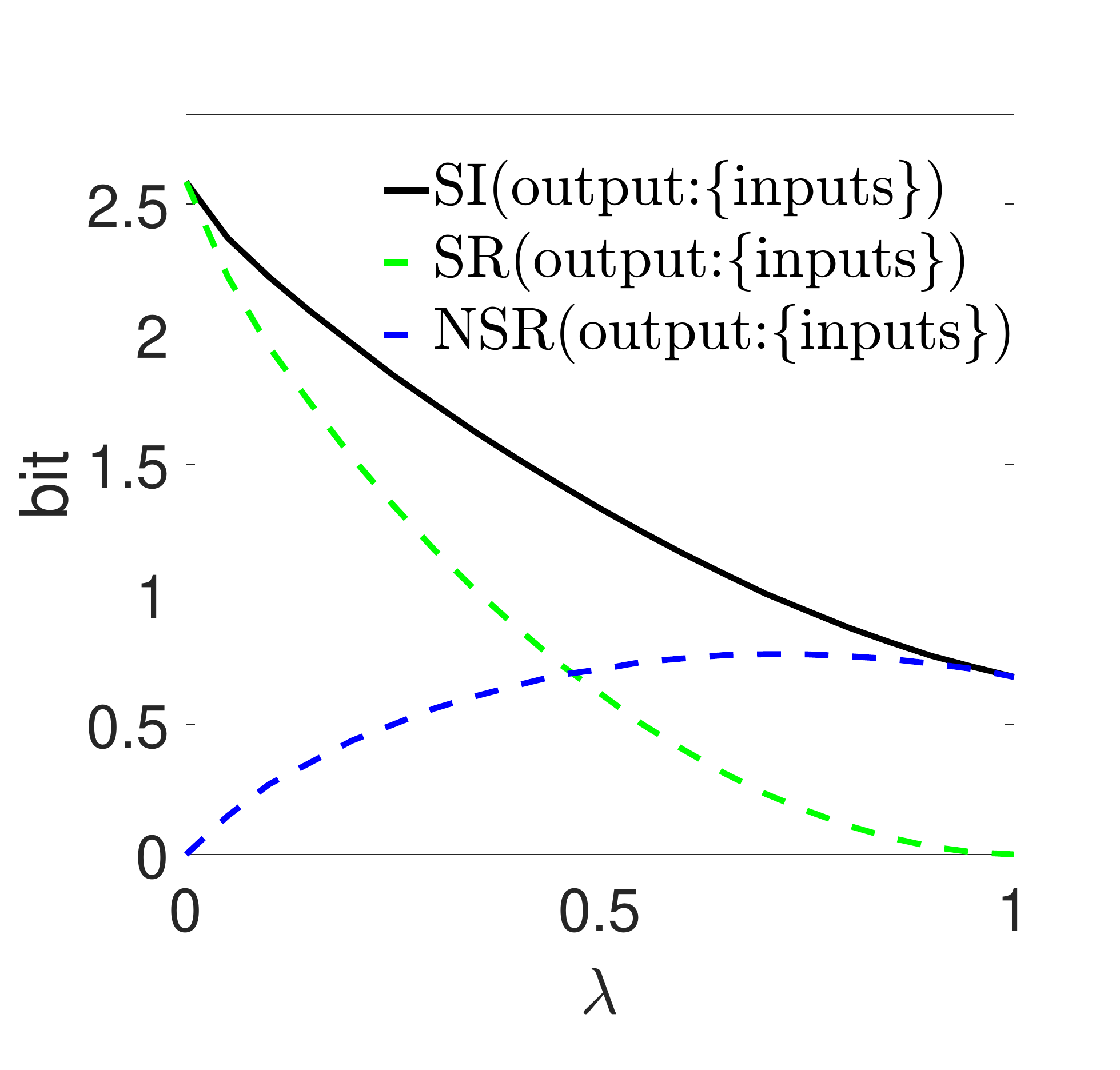}
\end{overpic}
\caption{}
\label{fig:dice_suma}
\end{subfigure}
\begin{subfigure}{.49\textwidth}
\begin{overpic}[trim={0 1cm 1cm 0cm},clip,width=\linewidth]{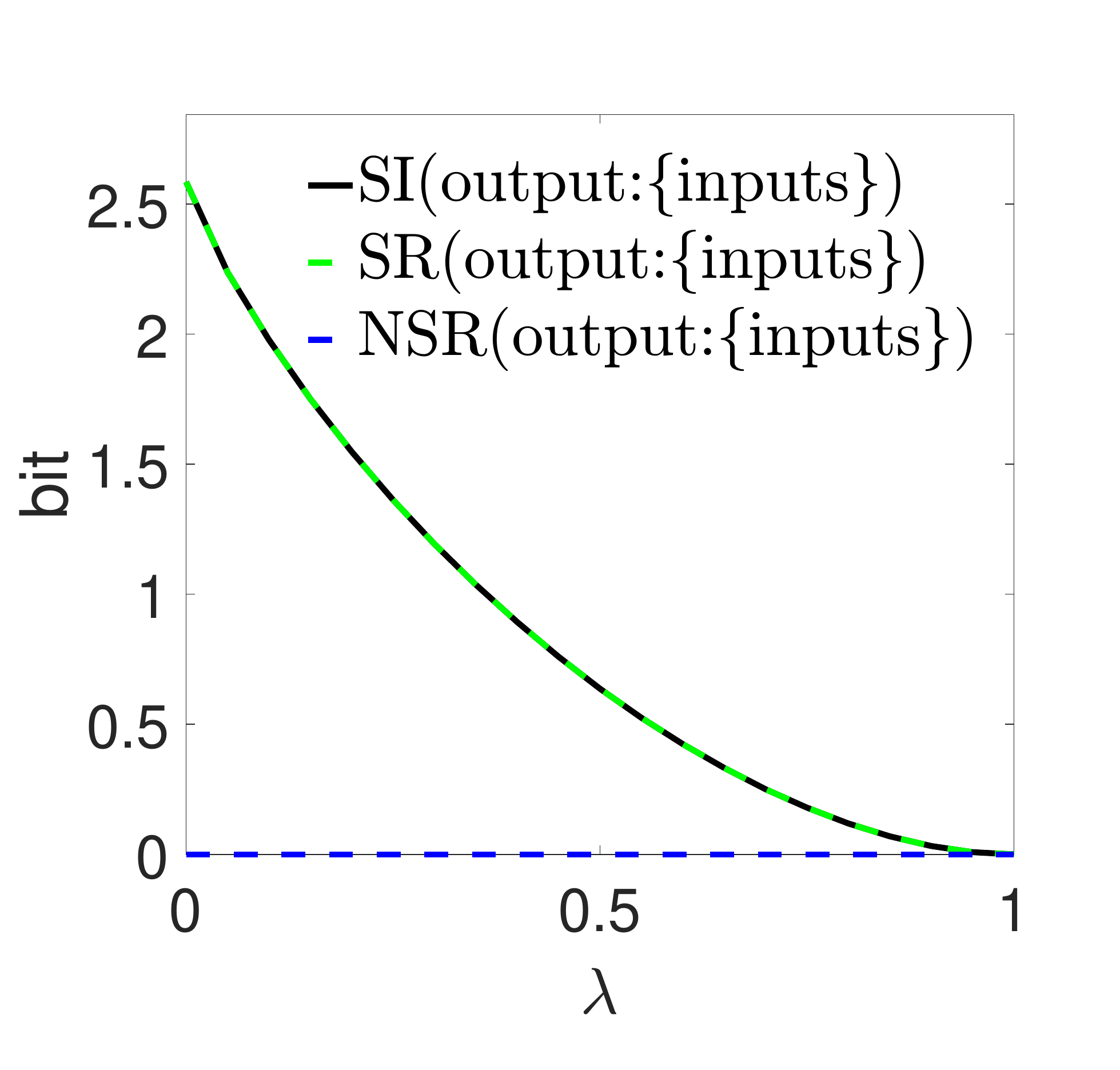}
\end{overpic}
\caption{}
\label{fig:dice_sumb}
\end{subfigure}
\caption{Two dice with uniformly distributed outcomes $(y, z)$ ranging from 1 to 6 are the inputs fed to an output third variable $X$ as follows: \textbf{a)} $x=y+z$; \textbf{b)} $x=y+6$ $z$. The parameter $\lambda$ controls the correlations between $Y$ and $Z$ (from complete correlation for $\lambda=0$ to complete independence for $\lambda=1$). In panel \textbf{a)} the inputs $Y, Z$ are symmetrically combined to determine the output $X$, while in panel \textbf{b)} this symmetry is lost: accordingly, for any fixed $\lambda$ the full redundancy $SI(X:\{Y;Z\})$ is larger in \textbf{a)} than in \textbf{b)}. Since the value of $\alpha$ does not change the inputs' correlations, the relative contribution of the inputs' correlations to the full redundancy $SI(X:\{Y;Z\})$ increases from \textbf{a)} to \textbf{b)}. Indeed,
in panel \textbf{a)} $NSR(X:\{Y;Z\})>0$ for $\lambda>0$, while in \textbf{b)} $SI(X:\{Y;Z\})=SR(X:\{Y;Z\})$ for any $\lambda$, i.e. the redundancy arises entirely from inputs' correlations.}
\label{fig:dice_sum}
\end{figure}
Consider a system where $Y$ and $Z$ are two uniform random variables, each representing the outcome of a die throw~\cite{Harder2013}. A parameter $\lambda$ controls the correlations between the two dice: $p(y,z)=\lambda/36 + (1-\lambda)/6 \hspace{1mm} \delta_{yz}$. Thus, for $\lambda=0$ the dice throws always match, for $\lambda=1$ the outcomes are completely independent. Further, the output $X$ combines each pair $(y,z)$ of input outcomes as $x=y + \alpha$ $z$, where $\alpha \in \{1,2,3,4,5,6\}$. This example was suggested in Ref.~\cite{Harder2013} specifically to point out the conceptual difficulties in quantifying the interplay between the redundancy and the source correlations, that we addressed with the identification of $SR$ and $NSR$ (see Section ~\ref{sec:sourcered}).

Harder \emph{et al.} calculated redundancy by using their own proposed measure $I_{\text{red}}$, which also abides by the PID axioms in Ref.~\cite{Williams2010} but is different than the measure $SI(X:\{Y;Z\})$ introduced in Ref.~\cite{Bertschinger2014}. However, Ref.~\cite{Bertschinger2014} also calculated $SI(X:\{Y;Z\})$ for this example: they showed that the differences between $SI(X:\{Y;Z\})$ and  $I_{\text{red}}$ in this example are only quantitative (and there is no difference at all for some values of $\alpha$), while the qualitative behaviour of both measures as a function of $\lambda $ is very similar.

Fig.~\ref{fig:dice_sum} shows our calculations of the full redundancy $SI(X: \{Y;Z\})$, the source redundancy $SR(X: \{Y;Z\})$ and the non-source redundancy $NSR(X: \{Y;Z\})$, based on the definitions in Ref.~\cite{Bertschinger2014}. We display the two 'extreme' cases $\alpha=1$, $\alpha=6$ for illustration.

With $\alpha=6$, $X$ is isomorphic to the joint variable $(Y,Z)$: for any $\lambda$, this implies on one hand $SI(X:\{Y;Z\})=SI((Y;Z):\{Y;Z\})=I(Y:Z)$, and on the other $UI(Y:\{Z\backslash X\})=0$ and thus $SI(Y:\{X;Z\})=I(Y:Z)$ (see Eq.~\ref{eq:pid1}).
According to Eq.~\ref{eq:SR2}, we thus find that the source redundancy $SR(X:\{Y;Z\})$ saturates, for any $\lambda$, the general inequality in Eq.~\ref{eq:srbound}: all correlations between the inputs $Y, Z$ also produce redundant information about $X$ (see Section ~\ref{sec:sourcered}). Further, according to Eq.~\ref{eq:NSR}, we find $NSR(X:\{Y;Z\})=0$ for any $\lambda$ (see Fig.~\ref{fig:dice_sumb}): we thus interpret that all the redundancy $SI(X:\{Y;Z\})$ arises from correlations between the inputs.
Instead, if we fix $\lambda$ and decrease $\alpha$, the two inputs $Y, Z$ are more and more symmetrically combined in the output $X$: with $\alpha=1$, the pieces of information respectively carried by each input about the output overlap maximally. Correspondingly, the full redundancy $SI(X:\{Y;Z\})$ increases ~\cite{Harder2013}. However, keeping $\lambda$ fixed does not change the inputs' correlations. Thus, we expect that the relative contribution of the inputs' correlations to the full redundancy $SI(X:\{Y;Z\})$ should decrease proportionally. Indeed, in Fig.~\ref{fig:dice_suma} we find $NSR(X:\{Y;Z\})>0$ for $\lambda>0$, which signals that a part of $SI(X:\{Y;Z\})$ is not related to the inputs' correlations.

We finally note that also in this paradigmatic example the splitting of the redundancy into $SR$ and $NSR$ addresses the challenge of separating the two kinds of redundancy outlined in Ref.~\cite{Harder2013}.

\subsubsection{Trivariate jointly Gaussian systems}\label{sec:gauss}

Barrett considered in detail the application of the PID to trivariate jointly Gaussian systems $(X,Y,Z)$ in which the target is univariate~\cite{Barrett2015}: he showed that several specific proposals for calculating the PID atoms all converge, in this case, to the same following measure of redundancy:
\begin{equation}\label{eq:redgauss}
SI(X: \{Y;Z\})=\min[I(X:Y), I(X:Z)].
\end{equation}
We note that Eq.~\ref{eq:redgauss} highlights the interesting property that, in trivariate Gaussian systems with a univariate target, the redundancy is as large as it can be, since it saturates the general inequalities $SI(X: \{Y;Z\})\leq I(X:Y), I(X:Z)$.

Direct application of our definitions in Eqs.~\ref{eq:SR2} and~\ref{eq:NSR} to such systems yields:
\begin{gather}
SR(X: \{Y;Z\}) = \min[I(X:Y), I(X:Z), I(Y:Z)], \nonumber\\
NSR(X: \{Y;Z\}) = \min[I(X:Y), I(X:Z)] - \min[I(X:Y), I(X:Z), I(Y:Z)]. \label{eq:revredgauss}
\end{gather}
Thus, we find that in these systems the source redundancy is also as large as it can be, since it also saturates the general inequalities $SR(X: \{Y;Z\}) \leq I(X:Y), I(X:Z), I(Y:Z)$ (which follow immediately from its definition in Eq.~\ref{eq:SR2}). 
Further, combining Eq.~\ref{eq:redgauss} with Eq.~\ref{eq:revredgauss} gives:
\begin{gather}
SR(X: \{Y;Z\}) = \begin{cases}
  SI(X: \{Y:Z\}), & \text{if} \hspace{1mm}  I(Y:Z) \geq SI(X: \{Y;Z\}), \\
  I(Y:Z), & \text{otherwise};
\end{cases}\label{eq:redrevgauss}\\
NSR(X: \{Y;Z\}) = \begin{cases}
 0 , & \text{if} \hspace{1mm} I(Y:Z)\geq SI(X: \{Y;Z\}), \\
 SI(X: \{Y:Z\})-I(Y:Z), & \text{otherwise}.
\end{cases}\label{eq:redirrevgauss}
\end{gather}
This identification of source redundancy, which quantifies \emph{pairwise correlations between the sources that also produce redundant information about the target} (see Section~\ref{sec:sourcered}), provides more insight about the distribution of information in Gaussian systems. Indeed, the property that source redundancy is maximal indicates that the correlations between any pair of source variables (for example, $\{Y; Z\}$ as considered above) produces as much redundant information as possible about the corresponding target ($X$ as considered above). Accordingly, when $I(Y:Z)< SI(X: \{Y;Z\})$, the redundancy also includes some non-source redundancy $NSR(X: \{Y;Z\})>0$ that implies the existence of synergy, i.e. $CI(X:\{Y;Z\})>0$.

\subsection{$RSI$ quantifies information between two variables that also passes monotonically through the third}\label{sec:passes}

In this section we discuss a practical interpretation of the reversible redundancy
subatom $RSI$ defined in Eq.\ref{eq:rsi1}. We note that $RSI(X \overset{\text{Z}}{\leftrightarrow} Y)$ appears both in $SI(X:\{Y;Z\})$ and in $SI(Y:\{X;Z\})$, i.e. it quantifies a common amount of information that $Z$ shares with each of the two other variables about the third variable. We shorten this description into the statement that $RSI(X \overset{\text{Z}}{\leftrightarrow} Y)$ quantifies 'information between $X$ and $Y$ that also passes through $Z$'. Reversible redundancy is further characterized by the following Proposition:
\begin{Proposition}\label{eq:monotone}
$RSI(X \overset{\text{(Z,Z')}}{\leftrightarrow} Y)$ $\geq$ $RSI(X \overset{\text{Z}}{\leftrightarrow} Y)$.
\end{Proposition}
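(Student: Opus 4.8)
The plan is to reduce the statement to a monotonicity property of the redundancy measure $SI$ under enlargement of one of its sources, and then to exploit the fact that the $\min$ operation preserves inequalities componentwise. Recalling the definition in Eq.~\ref{eq:rsi1}, we have $RSI(X \overset{\text{Z}}{\leftrightarrow} Y) = \min[SI(X: \{Y;Z\}), SI(Y: \{X;Z\})]$ and analogously $RSI(X \overset{\text{(Z,Z')}}{\leftrightarrow} Y) = \min[SI(X: \{Y;(Z,Z')\}), SI(Y: \{X;(Z,Z')\})]$. Thus it suffices to prove the two inequalities $SI(X: \{Y;Z\}) \leq SI(X: \{Y;(Z,Z')\})$ and $SI(Y: \{X;Z\}) \leq SI(Y: \{X;(Z,Z')\})$, since whenever $a \leq a'$ and $b \leq b'$ one has $\min[a,b] \leq \min[a',b']$; this immediately yields the Proposition.

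The key step is therefore the claim that replacing a source $Z$ by the larger source $(Z,Z')$ cannot decrease the redundancy, i.e. $SI(T: \{S;Z\}) \leq SI(T: \{S;(Z,Z')\})$ for any target $T$ and any other source $S$. I would establish this directly from Williams and Beer's axioms~\cite{Williams2010}, using the symmetry of $SI$ in its sources together with the \emph{monotonicity} axiom and its equality clause (adjoining a source that is a superset of an already present source leaves the redundancy unchanged). Concretely, consider the three-source collection $\{S;Z;(Z,Z')\}$. On one hand, $(Z,Z')$ is a superset of the present source $Z$, so by the equality clause of monotonicity $SI(T: \{S;Z;(Z,Z')\}) = SI(T: \{S;Z\})$. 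On the other hand, viewing the same collection as $\{S;(Z,Z')\}$ with the extra source $Z$ adjoined, plain monotonicity gives $SI(T: \{S;(Z,Z');Z\}) \leq SI(T: \{S;(Z,Z')\})$. Since the two left-hand sides denote the same quantity (by symmetry of $SI$ in its sources), chaining the identity with the inequality yields $SI(T: \{S;Z\}) \leq SI(T: \{S;(Z,Z')\})$, as required.

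Applying this lemma twice --- once with $(T,S)=(X,Y)$ and once with $(T,S)=(Y,X)$ --- furnishes the two componentwise inequalities above, and the $\min$-monotonicity remark completes the argument. I expect the main obstacle to be purely one of bookkeeping in the middle step: one must invoke the monotonicity axiom in two opposite-looking ways, being careful that \emph{enlarging} an already present source leaves the redundancy unchanged (equality clause), whereas \emph{adjoining} a genuinely new, smaller source can only decrease it (plain monotonicity). Getting these two uses of the same axiom to combine correctly is the only delicate point; everything else follows at once from the definitions. I would also note that the argument invokes only the axiomatic properties of $SI$ and not any specific measure, consistent with the paper's stance that the subatoms rely solely on Williams and Beer's construction.
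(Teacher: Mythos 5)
Your proof is correct and takes essentially the same route as the paper's: both reduce the claim to the componentwise inequalities $SI(X:\{Y;(Z,Z')\})\geq SI(X:\{Y;Z\})$ and $SI(Y:\{X;(Z,Z')\})\geq SI(Y:\{X;Z\})$ and then conclude via the monotonicity of $\min$ under componentwise ordering. The only difference is that the paper simply cites these inequalities from Ref.~\cite{Bertschinger2013}, whereas you derive them from Williams and Beer's axioms (the superset-equality clause of monotonicity, plain monotonicity for an adjoined source, and symmetry), which is the standard argument and is sound.
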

\begin{proof}\hspace{1mm}
Both $SI(X:\{Y;(Z,Z')\})\geq SI(X:\{Y;Z\})$ and $SI(Y:\{X;(Z,Z')\})\geq SI(Y:\{X;Z\})$ ~\cite{Bertschinger2013}. Thus, $RSI(X \overset{\text{Z}}{\leftrightarrow} Y)=\min[ SI(X:\{Y;Z\}), SI(Y:\{X;Z\})] \leq  \min [SI(X:\{Y;(Z,Z')\}), SI(Y:\{X;(Z,Z')\})]= RSI(X \overset{\text{(Z,Z')}}{\leftrightarrow} Y) $.
\end{proof}
Indeed, the fact that $RSI$ always increases whenever we expand the middle variable corresponds to the increased capacity of the entropy of the middle variable to host information between the endpoint variables.
We discuss this interpretation of $RSI$ by examining several examples, where we can motivate \emph{a priori} our expectations about this novel mode of information sharing. 

\subsubsection{Markov chains}

Consider the most generic Markov chain $X \rightarrow Z \rightarrow Y$, defined by the property that $p(x,y|z)=p(x|z)p(y|z)$, i.e. that $X$ and $Y$ are conditionally independent given $Z$.
The Markov structure allows us to formulate a clear \emph{a priori} expectation about the amount of information between the endpoints of the chain, $X$ and $Y$, that also 'passes through the middle variable $Z$': this information should clearly equal $I(X:Y)$, because whatever information is established between $X$ and $Y$ \emph{must} pass through $Z$.
\begin{figure}[t!]
\centering
\includegraphics[width=.7\textwidth]{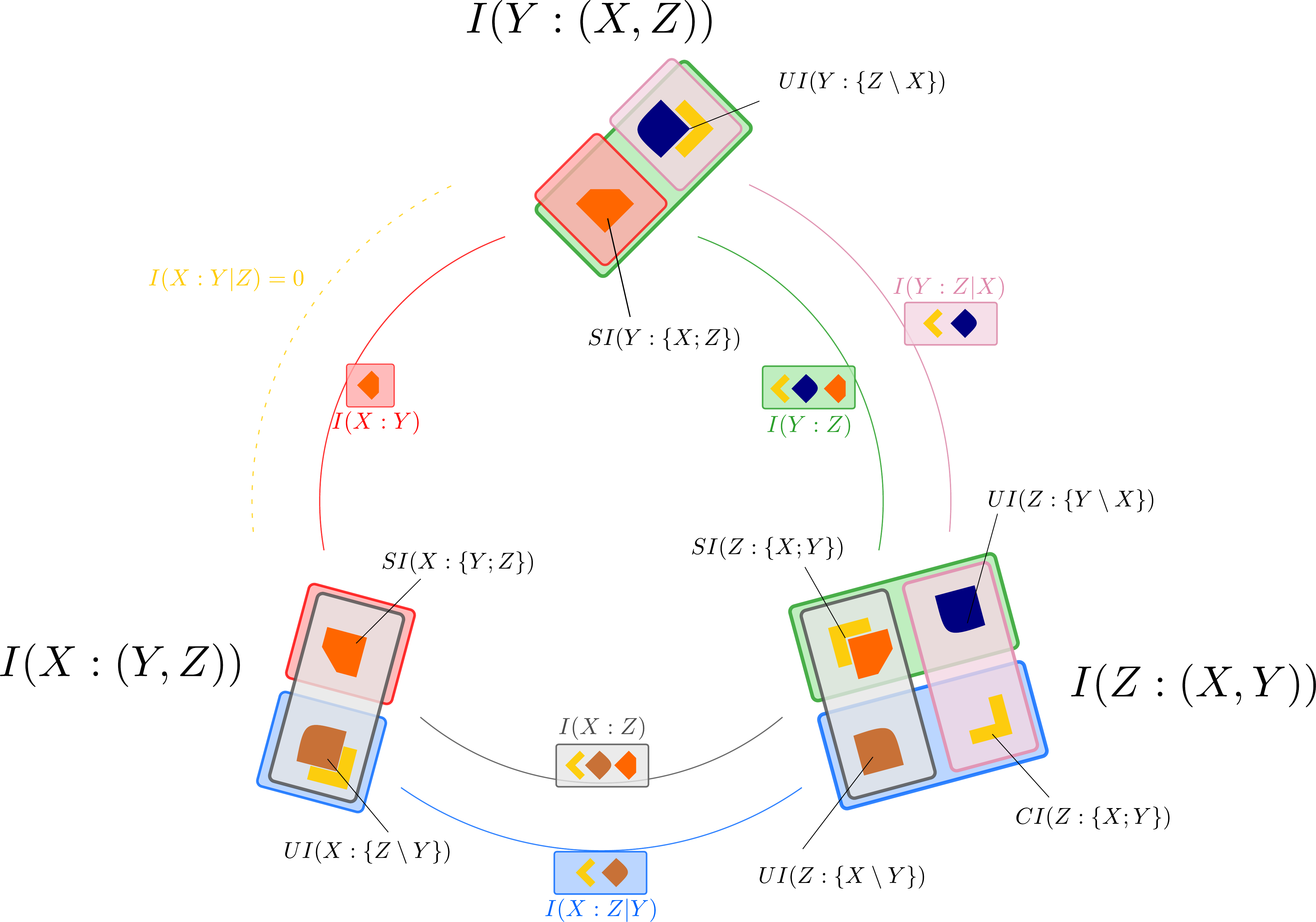}
\caption{The finer structure of the three PID diagrams, as defined in Fig.~\ref{fig:entropyconst}, in the case of a generic Markov chain $X \rightarrow Z \rightarrow Y$. Three of the seven subatoms of the minimal set described in Fig.~\ref{fig:entropyconst} are forced to zero by the Markov property $I(X:Y|Z)=0$. }
\label{fig:markovdec}
\end{figure}
Indeed,
the Markov property $I(X:Y | Z)=0$ implies, as we see immediately from Fig.~\ref{fig:PIDs}, that $SI(X:\{Y;Z\})=SI(Y:\{X;Z\})=I(X:Y)$. By virtue of Eqs.~\ref{eq:rsi1} and~\ref{eq:irsi1}, in accordance with our expectation, we find
\begin{gather}
RSI(X \overset{\text{Z}}{\leftrightarrow} Y)=I(X:Y),\label{eq:revmarkov} 
\end{gather}
which holds true independently of the marginal distributions of $X$, $Y$ and $Z$. Notably, the symmetry of $RSI(X \overset{\text{Z}}{\leftrightarrow} Y)$ under swap of the endpoint variables $X \leftrightarrow Y$ is also compatible with the property that $X \rightarrow Z \rightarrow Y$ is a Markov chain if and only if $Y \rightarrow Z \rightarrow X$ is a Markov chain. In words, whatever information flows from $X$ through $Z$ to $Y$ equals the information flowing from $Y$ through $Z$ to $X$.

More generally, the Markov property $I(X:Y | Z)=0$ implies that $RCI(Y \overset{\text{Z}}{\leftrightarrow} X)=IRSI(X \overset{\text{Z}}{\leftarrow} Y)=RUI(X \overset{\text{Z}}{\leftrightarrow} Y)=0$: thus, only four of the seven subatoms of the minimal set in Fig.~\ref{fig:entropyconst} can be larger than zero. The three PIDs of the system, decomposed with the minimal set, are shown in Fig.~\ref{fig:markovdec}. In particular, we see that also $RSI(X \overset{\text{Y}}{\leftrightarrow} Z)=I(X:Y)$, thus matching our expectation that in the Markov chain $X \rightarrow Z \rightarrow Y$ the information between $X$ and $Z$ that also passes through $Y$ still equals $I(X:Y)$. 
We finally note that none of the results regarding Markov chains depends on specific definitions of the PID atoms: they were derived only on the basis of the PID axioms in Ref.~\cite{Williams2010}.

\subsubsection{Two parallel communication channels }
Consider five binary uniform random variables $X_{1}, X_{2}, Y_{1}, Y_{2}, Z$ with three parameters $0 \leq \lambda_1, \lambda_2, \lambda_3 \leq 1$ controlling the correlations $X_{1} \overset{\lambda_1}{\leftrightarrow} Z \overset{\lambda_2}{\leftrightarrow} Y_{1}$, $X_{2} \overset{\lambda_3}{\leftrightarrow} Y_{2}$ (in the same way $\lambda$ controls the $Y \overset{\lambda}{\leftrightarrow} Z $ correlations in ~\ref{sec:copy}). We consider the trivariate system ($X, Y, Z$) with $X=(X_{1},X_{2})$ and $Y=(Y_{1},Y_{2})$ (see Fig.~\ref{fig:parallel}). We intuitively expect that the information between $X$ and $Y$ that also passes through $Z$, in this case, should equal $I(X_{1}:Y_{1})$, which in general will be smaller than $I(X:Y)=I(X_{1}:Y_{1})+I(X_{2}:Y_{2})$. Indeed, we computed the PID atoms with the definitions in Ref.~\cite{Bertschinger2014} over a fine discretization of the $(\lambda_1, \lambda_2, \lambda_3)$-space and we always found that $RSI(X \overset{\text{Y}}{\leftrightarrow} Z)=I(X_{1}:Y_{1})$.

\begin{figure}[h!]
\centering
\begin{overpic}[trim={2cm 9cm 8cm 2cm},clip,width=.8\textwidth]{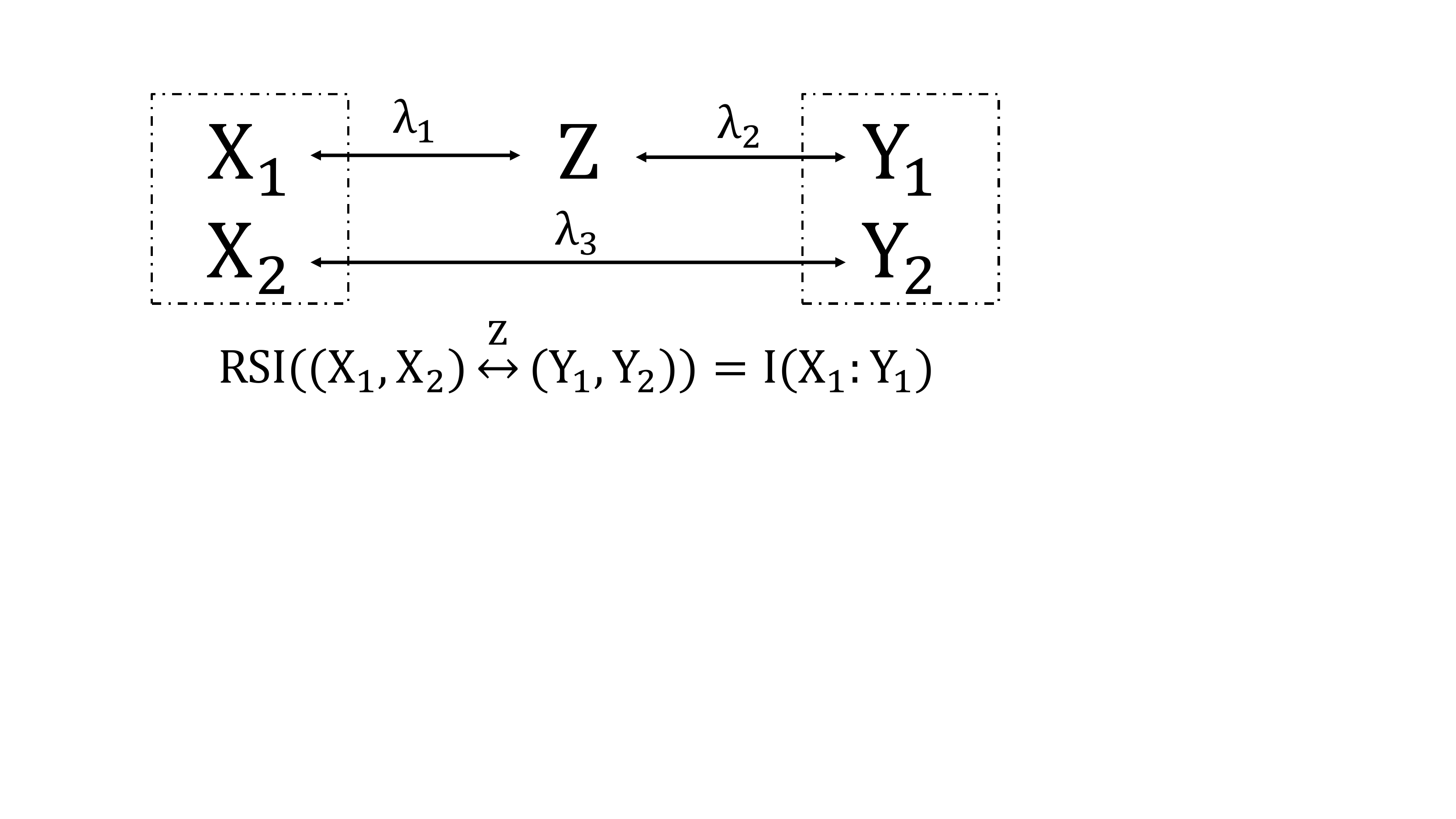}
\end{overpic}
\caption{$X=(X_{1},X_{2})$ and $Y=(Y_{1},Y_{2})$ share information via two parallel channels: one passes through $Z$, the other does not. The parameters $\lambda_1, \lambda_2, \lambda_3$ control the correlations as depicted with the arrows. In agreement with the interpretation in Section~\ref{sec:passes}, $RSI(X \overset{\text{Z}}{\leftrightarrow} Y)$ here quantifies information between $X$ and $Y$ that also passes through $Z$, which clearly equals $I(X_{1}:Y_{1})$.}
\label{fig:parallel}
\end{figure}

\subsubsection{Other examples}

To further describe the interpretation of $RSI(X \overset{\text{Z}}{\leftrightarrow} Y) $ as information between $X$ and $Y$ that also passes through $Z$, we here reconsider the examples, among those discussed before Section~\ref{sec:passes}, where we can formulate intuitive expectations about this information sharing mode.
In the dyadic system described in Fig.~\ref{tab:dyadic}, our expectation is that there should be no information between two variables that also passes through the other: indeed, $RSI(X \overset{\text{Z}}{\leftrightarrow} Y) = 0$ in this case. Instead, in the triadic system described in Fig.~\ref{tab:triadic}, we expect that the information between two variables that also passes through the other should equal the information that is shared among all three variables, which amounts to 1 bit. Indeed, we find $RSI(X \overset{\text{Z}}{\leftrightarrow} Y) = 1$ bit.
In the 'copying' example in Section ~\ref{sec:copy} $X=(Y,Z)$, thus we expect that $I(Y:Z)$ corresponds to information between $Y, Z$ that also passes through $X$, but also to information between $X$ and $Z$ that also passes through $Y$. Indeed, $RSI(X \overset{\text{Z}}{\leftrightarrow} Y) = RSI(X \overset{\text{Y}}{\leftrightarrow} Z) = I(Y:Z)$. We remark that all the values of $RSI$ in these examples depend on the specific definitions of the PID atoms in Ref.~\cite{Bertschinger2014}.

\section{Discussion}


The Partial Information Decomposition (PID) pioneered by Williams and Beer has provided an elegant construction to quantify, with information theory, different ways two stochastic variables can carry information about a third variable~\cite{Williams2010}. In particular, it has enabled consistent definitions of synergy, redundancy and unique information components among three stochastic variables~\cite{Bertschinger2014}. More generally, it has generated considerable interest as it addresses the difficult yet practically important problem of extending information theory beyond the classic bivariate measures of Shannon to  fully characterize multivariate dependencies. 

However, the axiomatic PID construction, as originally formulated by Williams and Beer, fundamentally relied on the possibly arbitrary classification of the three variables into sources and target, and this partial perspective prevented a complete and general information-theoretic description of trivariate systems. More specifically, the original PID framework could not quantify some important modes of information sharing, such as source redundancy~\cite{Harder2013}, and could not allot the full joint entropy of some important trivariate systems~\cite{James2016}. The work presented here addresses these issues by extending the original PID framework in two respects. First, we decomposed the original PID atoms in terms of finer information subatoms with a well defined interpretation that is invariant to different variables' classifications. Then, we constructed an extended framework to completely decompose the distribution of information within any trivariate system. 

Importantly, our formulation did not require the addition of further axioms to the original PID construction. We proposed that distinct PIDs for the same system, corresponding to different target selections, should be evaluated and then compared to identify how the decomposition of information changes across different perspectives. More specifically, we identified reversible pieces of information ($RSI$, $RUI$, $RCI$) that contribute to the same kind of PID atom if we reverse the roles of target and source between two variables. The complementary subatomic components of the PID lattices are the irreversible pieces of information ($IRSI$), that contribute to different kinds of PID atom for different target selections. These subatoms thus measure asymmetries between different decompositions of Shannon quantities pertaining to two different PIDs of the same system, and such asymmetries reveal the additional detail with which the PID atoms assess trivariate dependencies as compared to the coarser and more symmetric Shannon information quantities.

The crucial result of this approach was unveiling the finer structure underlying the PID lattices: we showed that an invariant minimal set of seven information subatoms is sufficient to decompose the three PIDs of any trivariate system. In the remainder of this section, possible uses of these subatoms and their implications for the understanding of systems of three variables are discussed. 

\subsection{Use of the subatoms to map the distribution of information in trivariate systems}

Our minimal subatoms' set was first used to characterize more finely the distribution of information among three variables. We clarified the interplay between the redundant information shared between two variables $A$ and $B$ about a third variable $C$, on one side, and the correlations between $A$ and $B$, on the other. We decomposed the redundancy into the sum of source-redundancy $SR$, which \emph{quantifies the part of the redundancy which arises from the pairwise correlations between the sources $A$ and $B$}, and non-source redundancy $NSR$, which can be larger than zero even if the sources $A$ and $B$ are statistically independent. Interestingly, we found that $NSR$ \emph{quantifies the part of the redundancy which implies that $A$ and $B$ also carry synergistic information about $C$}. The separation of these qualitatively different components of redundancy promises to be useful in the analysis of any complex system where several inputs are combined to produce an output~\cite{Wibral2017}.

Then, we used our minimal subatoms' set to extend the descriptive power of the PID framework in the analysis of any trivariate system. We constructed a general, unique, and nonnegative decomposition of the joint entropy $H(X,Y,Z)$ in terms of information-theoretic components that can be clearly interpreted without arbitrary variable classifications. This construction parallels the decomposition of the bivariate entropy $H(X,Y)$ in terms of Shannon's mutual information $I(X:Y)$. We demonstrated the descriptive power of this approach by decomposing the complex distribution of information in dyadic and triadic systems, which was shown not to be possible within the original PID framework~\cite{James2016}.

We gave practical examples of how the finer structure underlying the PID atoms provides more insight into the distribution of information within important and well-studied trivariate systems. In this spirit, we put forward a practical interpretation of the reversible redundancy $RSI$, and future work will address additional interpretations of the components of the minimal subatoms' set.

\subsection{Possible extensions of the formalism to multivariate systems with many sources}

The insights that derive from our extension of the PID framework suggest that the PID lattices could also be useful to characterize the statistical dependencies in multivariate systems with more than two sources. Our approach does not rely on the adoption of specific PID measures, but only on the axiomatic construction of the PID lattice. Thus, it can be immediately extended to the multivariate case by embedding trivariate lattices within larger systems' lattices~\cite{Chicharro2017}: a further breakdown of the minimal subatoms' set could be obtained if the current system were embedded as part of a bigger system. Further, when systems with more than two sources are considered, the definition of source redundancy might be extended as to determine which subatoms of a redundancy can be explained by dependencies among the sources (for example, by replacing the mutual information between two sources with a measure of the overall dependencies among all sources, such as the \emph{total correlation} introduced in Ref.~\cite{Watanabe1960}). More generally, the idea of comparing different PID diagrams that partially decompose the same information can also be generalized to identify finer structure underlying higher-order PID lattices with different numbers of variables.
These identifications might also help addressing specific questions about the distribution of information in complex multivariate systems.

\subsection{Potential implications for systems biology and systems neuroscience}

A common problem in system biology is to characterize how the function of the whole biological system is shaped by the dependencies among its many constituent biological variables. In many cases, ranging from gene regulatory networks~\cite{Margolin2006}
 to metabolic pathways~\cite{Ludtke2008} and to systems neuroscience~\cite{Averbeck2006,Quiroga2009,Panzeri2015}, an information-theoretic decomposition of how information is distributed and processed within different parts of the system would allow a model-free characterization of these dependencies. The work discussed here can be used to shed more light on these issues by allowing to tease apart qualitatively different modes of interaction, as a first necessary step to understanding the causal structure of the observed phenomena~\cite{Pearl2009}.

In systems neuroscience, the decomposition introduced here may be important for studying specific and timely questions about neural information processing. This work can contribute to the study of neural population coding, that is the study of how the concerted activity of many neurons encodes information about ecologically relevant variables such as sensory stimuli~\cite{Shamir2014,Panzeri2015}.
In particular, a key characteristic of a neural population code is the degree to which pairwise or higher-order cross-neuron statistical dependencies are used by the brain to encode and process information, in a potentially redundant or synergistic way, across neurons~\cite{Pola2003,Schneidman2003,Latham2005} and across time~\cite{Runyan2017}. 
Our work is also of potential relevance to study another hot issue in neuroscience, that is the relevance of the information about sensory variables carried by neural activity for perception~\cite{Panzeri2017,Jazayeri2017}. This is a crucial issue to resolve the diatribe about the nature of the neural code, that is the set of symbols used by neurons to encode information and produce brain function~\cite{Gallego2017,Sharpee2017,Pitkow2017}. Addressing this problem requires mapping the information in the multivariate distribution of variables such as the stimuli presented to the subject, the neural activity elicited by the presentation of such stimuli, and the behavioral reports of the subject's perception. More specifically, it requires  characterizing the information between the presented stimulus  and the behavioral report of the perceived stimulus  that can be extracted from neural activity~\cite{Panzeri2017}. It is apparent that developing general decompositions of the information exchanged in multivariate systems, as we did here, is key to succeeding in rigorously addressing these fundamental systems-level questions. 






\vspace{6pt}


\section*{Acknowledgments}{We are grateful to members of Panzeri's Laboratory for useful feedback, and to P. E. Latham, A. Brovelli and C. de Mulatier for useful discussions. This research was supported by the Fondation Bertarelli.}

\section*{Author contributions}{All authors conceived the research; G.P., E.P. and D.C. performed the research; G.P., E.P. and D.C. wrote a first draft of the manuscript; all authors edited and approved the final manuscript; S.P. supervised the research.}




\appendix


\bibliographystyle{unsrt}


\bibliography{refs_paper}


\end{document}